\pgfplotsset{compat=1.16}
\newtheorem{observation}{Observation}
\newenvironment{proof-sketch}{\paragraph{Proof sketch:}}{\hfill$\square$}
\newcommand{\sample}{\ensuremath{\mathcal{S}}}
\newcommand{\size}[1]{\abs{#1}}
\newcommand{\prop}{\ensuremath{\mathcal{P}}}
\newcommand{\operators}{\ensuremath{\Lambda}}
\newcommand{\ltrue}{\mathit{true}}
\DeclareMathOperator{\lfalse}{\mathit{false}}
\DeclareMathOperator{\lnext}{\mathbf{X}}
\DeclareMathOperator{\luntil}{\mathbf{U}}
\DeclareMathOperator{\leventually}{\mathbf{F}}
\DeclareMathOperator{\lglobally}{\mathbf{G}}
\DeclarePairedDelimiter{\floor}{\lfloor}{\rfloor}
\DeclareMathOperator{\unaryvar}{\circ}
\newcommand{\binaryvar}{\mathbin{\circ}}
\newcommand{\ltlset}{\ensuremath{\mathcal{F}_\mathrm{LTL}}}
\newcommand{\unaryOp}{\ensuremath{\Lambda_U}}
\newcommand{\binaryOp}{\ensuremath{\Lambda_B}}
\newcommand{\sketch}{\ensuremath{\varphi^{?}}}
\newcommand{\placeholderset}{\ensuremath{\Pi}}
\newcommand{\placeholder}{\ensuremath{?}}
\newcommand{\suf}[1]{\mathit{suffix}({#1})}
\newcommand{\sketchprob}{\textit{LTL sketching}}
\newcommand{\learnprob}{\textit{LTL learning}}
\newcommand{\existprob}{\textit{LTL sketch existence}}
\newcommand{\sattable}[2]{\ensuremath{T}^{#1}_{#2}}
\newcommand{\existFormula}{\Phi^{\sketch,\sample}}
\newcommand{\incrementalFormula}[1]{\Psi^{\sketch,\sample}_{#1}}
\newcommand{\tool}{\texttt{LTL-Sketcher}}
\newcommand{\zeroset}{Type-0}
\newcommand{\onetwoset}{Type-1-2}
\newcommand{\algonesat}{Algo1-SAT}
\newcommand{\algonedt}{Algo1-DT}
\newcommand{\algtwo}{Algo2}
\newcommand{\abs}[1]{\ensuremath{|#1|}}
\newcommand{\nat}{\ensuremath{\mathbb{N}}}
\DeclareMathOperator{\lcm}{\mathit{lcm}}
\begin{document}
	\title{Specification sketching for Linear Temporal Logic}

	\author{Simon Lutz\inst{1} \and Daniel Neider\inst{2} \and Rajarshi Roy
		\inst{2}
	}
	
	\authorrunning{Lutz et al.}
	
	\institute{
		Technical University of Kaiserslautern, Kaiserslautern, Germany \and
		Max Planck Institute for Software Systems, Kaiserslautern, Germany
	}

	\maketitle            
	
	\begin{abstract}
		Virtually all verification and synthesis techniques assume that the formal specifications are readily available, functionally correct, and fully match the engineer's understanding of the given system.
		However, this assumption is often unrealistic in practice: formalizing system requirements is notoriously difficult, error-prone, and requires substantial training.
		To alleviate this severe hurdle, we propose a fundamentally novel approach to writing formal specifications, named \emph{specification sketching for Linear Temporal Logic (LTL)}.
		The key idea is that an engineer can provide a partial LTL formula, called an LTL \emph{sketch}, where parts that are hard to formalize can be left out.
		Given a set of examples describing system behaviors that the specification should or should not allow, the task of a so-called \emph{sketching algorithm} is then to complete a given sketch such that the resulting LTL formula is consistent with the examples.
		We show that deciding whether a sketch can be completed falls into the complexity class NP and present two SAT-based sketching algorithms.
		We also demonstrate that sketching is a practical approach to writing formal specifications using a prototype implementation.
		
		\keywords{Formal Specifications \and Linear Temporal Logic \and Sketching}
	\end{abstract}
	

\section{Introduction}\label{sec:introduction}

Due to its unique ability to prove the absence of errors mathematically, formal verification is a time-tested method of ensuring the safe and reliable operation of safety-critical systems.
Success stories of formal methods include application domains such as communication system~\cite{FeckoUASDMM00,Lowe96}, railway transportation~\cite{BacheriniFTZ06,BadeauA05},
aerospace~\cite{GarioCMTR16,CoferM14}, and operating systems~\cite{VerhulstJ07,KleinAEHCDEEKNSTW10} to name but a few.

However, there is an essential and often overlooked catch with formal verification: virtually all techniques assume that the specification required for the design or verification of a system is available in a suitable format, is functionally correct, and expresses precisely the properties the engineers had in mind.
These assumptions are often unrealistic in practice.
Formalizing system requirements is notoriously difficult and
error-prone~\cite{Bowen20,PakonenPBV16,SchlorJW98}.
Even worse, the training effort required to reach proficiency with specification languages can be disproportionate to the expected benefits~\cite{Courtois}, and the use of formalisms such as temporal logics require a level of sophistication that many users might never develop~\cite{Holzmann02}.

We address this severe, practical issue by introducing a fundamentally novel approach to writing formal specifications, named \emph{specification sketching}.
Inspired by recent advances in automated program synthesis~\cite{SolarLezama13,SolarLezamaRBE05}, our new paradigm allows engineers to express their high-level insights about a system in terms of a partial specification, named \emph{specification sketch}, where parts that are difficult or error-prone to formalize can be left out.
Moreover, the engineer is asked to provide example executions of the system that the specification should allow or forbid.
Based on this additional data, a so-called \emph{sketching algorithm} then fills in the missing low-level details to obtain a complete specification.

While the concept of specification sketching is conceivable for a wide range of specification languages, we here focus on Linear Temporal Logic (LTL)~\cite{Pnueli77}.
The rationale for this choice is twofold.
First, LTL is popular in academia and widely used in industry~\cite{Fix08,GarioCMTR16,Holzmann97}, making it the de~facto standard for expressing (temporal) properties in verification and synthesis.
Second, LTL is well-understood and enjoys good algorithmic properties~\cite{ClarkeE81,Pnueli77}.
Furthermore, its intuitive and variable-free syntax have recently prompted efforts to adopt LTL (over finite words) also in artificial intelligence (e.g., to specify reward functions in reinforcement learning~\cite{DBLP:conf/ijcai/CamachoIKVM19}).

In the context of specification sketching for LTL, a sketch can leave logical operators or even entire subformulas unspecified, while examples are ultimately-periodic words.
To illustrate this setting, let us consider the request-response property $P$ expressing that every request $p$ has to be answered eventually by a response $q$.
This property can be formalized by the LTL formula $\varphi \coloneqq \lglobally (p \rightarrow \lnext \leventually q)$, which uses the standard temporal modalities $\leventually$ (``finally''), $\lglobally$ (``globally''), and $\lnext$ (``next'').
However, let us assume for the sake of this example that an engineer is unsure how to formalize $P$.
In this situation, our sketching paradigm allows the engineer to express high-level insight in the form of a sketch, say $\lglobally (p \rightarrow {\placeholder})$, where the question mark indicates which part of the specification is missing.
Additionally, the engineer provides two examples:
\begin{enumerate*}[label={(\roman*)}]
	\item a positive (infinite) trajectory $(\{ p\} \{ q \})^\omega$, expressing that $q$ is the response that should be used to answer a request, and
	\item a negative (infinite) trajectory $\{ q \}^\omega$, intended to disallow the system to send responses without requests.
\end{enumerate*}
Our sketching algorithm then computes a substitution for the question mark such that the completed LTL formula is consistent with the examples (e.g., $\lnext \leventually q$).
We set up all necessary definitions in Section~\ref{sec:preliminaries} and formally define LTL sketching in Section~\ref{sec:problem}.

It turns out that it is not always possible to find a substitution that is consistent with the given examples (see Example~\ref{eg:formula-non-existence} on Page~\pageref{eg:formula-non-existence}).
However, we show in Section~\ref{sec:theoritical-results} that the problem of deciding whether such a substitution exists is in the complexity class \NP.
Moreover, we develop an effective decision procedure that reduces the original question to a satisfiability problem in propositional logic.
This reduction permits us to apply highly-optimized, off-the-shelf SAT solvers to check whether a consistent substitution exists.

In Section~\ref{sec:algorithms}, we develop two sketching algorithms for LTL.
The first algorithm uses the decision procedure of Section~\ref{sec:theoritical-results} as a sub-routine and transforms the sketching problem into a series of LTL learning problems (i.e., in problems of learning an LTL formula---without syntactic constraints---from positive and negative examples).
This transformation allows us to apply a diverse array of learning algorithms for LTL, which have been proposed during the last five years~\cite{NeiderG18,CamachoM19,Riener19}.
Our second sketching algorithm, on the other hand, extends the LTL learning algorithm by Neider and Gavran~\cite{NeiderG18} and uses a SAT-based technique as an effective means to search for solutions of increasing size.

Finally, Section~\ref{sec:experiments} presents an empirical evaluation with a prototype implementation, named \tool.
We observed that our algorithms are effective in completing a variety of sketches having different missing information.
Further, comparing our algorithms, we concluded that our second algorithm outperforms our first algorithm in terms of running time and size of the solution.
We conclude this paper in Section~\ref{sec:conclusion} with a discussion of promising directions for future work.

\subsubsection*{Related Work}

The general idea of allowing partial specifications is not entirely new, but it has not yet been investigated as general as in this work.
The most similar setting is one in which templates are used to mine temporal specifications from system executions.
In this context, a template is a partial formula, similar to a sketch.
However, unlike a sketch, a template is typically completed with single atomic propositions or simple formulas, usually without temporal modalities (e.g., restricted Boolean combinations of atomic propositions).
An example of this approach is Texada~\cite{LemieuxB15,LemieuxPB15}, a specification miner for LTL\textsubscript{f} formulas (i.e., LTL over finite horizon).
Texada accepts an arbitrary template (property type in their terminology) and a set of system executions as input and completes the template with atomic propositions such that the resulting LTL formula satisfies all of the system executions.
Since LTL\textsubscript{f} (and LTL for that matter) is defined over a finite, user-provided set of atomic propositions, there are only finitely many ways to fill in the missing parts of a template, and Texada amounts to a search over this finite search space.
By contrast, our approach can complete a sketch with complex LTL formulas and, hence, has to search in an infinite space.

Various other techniques operate in settings where the templates are even more restricted.
For example, Li et al.~\cite{LiDS11} mine LTL specification based on templates from the GR(1)-fragment of LTL (e.g., $\lglobally\leventually\placeholder$, $\lglobally(\placeholder_1 \rightarrow \lnext\placeholder_2)$, etc.), while Shah et al.~\cite{ShahKSL18} mine LTL formulas that are conjunctions of a set of common temporal properties as identified by Dwyer et al.~\cite{DwyerAC98}.
In addition, the work by Kim et al.~\cite{KimMSAS19} uses a set of interpretable LTL templates, widely used in the development of software systems, to obtain LTL formulas robust to noise in the input data.
In the context of CTL, on the other hand, Wasylkowski and Zeller~\cite{WasylkowskiZ11} mine specifications using templates of
the form $\mathbf{A} \leventually \placeholder$, $\mathbf{A} \lglobally(\placeholder_1 \rightarrow \leventually\placeholder_2)$.
However, all of the works above complete the templates only with atomic propositions (and their negations in some cases).

Another setting is the one in which general (and complex) temporal specifications are learned from system executions without any constraint on the structure of the specification.
The most notable work in this setting is that by Neider and Gavran~\cite{NeiderG18}, who learn LTL formulas from system executions using a SAT solver.
Similar to their Neider and Gavran is the work by Camacho et al.~\cite{CamachoM19}, which proposes a SAT-based learning algorithm for LTL\textsubscript{f} formulas via Alternating Finite Automata as an intermediate representation.
Raha et al.~\cite{abs-2110-06726} present a scalable approach for learning formulas in a fragment of LTL\textsubscript{f} without the $\luntil$-operator, while Roy, Fisman, and Neider~\cite{0002FN20} consider the Property Specification Language~(PSL).
However, none of these works can exploit insights about the structure of the specification to aid the learning\slash{} mining process.


Finally, it is worth mentioning that LTL sketching can be seen as a particular case of syntax-guided synthesis, where syntactic constraints on the resulting formulas are expressed in terms of context-free grammars.
While this approach is more expressive in that it also allows restricting the syntax of the formulas that can be used to fill in missing parts of a sketch, we are convinced that the concept of sketching is more natural and intuitive, and it allows for efficient algorithms.
An example of a syntax-guided approach is SySLite~\cite{ArifLERCT20}, a CVC4-based tool for learning Past-time LTL (over finite executions).
However, we are unaware of any syntax-guided algorithm for our setting: learning LTL formulas over infinite executions.

\section{Preliminaries}
\label{sec:preliminaries}

We now formally introduce the basic notions that are used throughout the paper. 

\subsubsection{Finite and Infinite Words} 
To model trajectories of a system, we use the notion of words over an alphabet that represents the system events.
Formally, an \emph{alphabet} $\Sigma$ is a nonempty, finite set.
The elements of this set are called \emph{symbols}.
A \emph{finite word} over an alphabet $\Sigma$ is a sequence $u = a_0 \dots a_n$ of symbols $a_i \in \Sigma$ for $i \in \{0, \dots, n\}$.
The empty sequence, referred to as \emph{empty word}, is denoted by $\varepsilon$.
The length of a finite word $u$ is denoted by $|u|$, where $|\epsilon| = 0$.
Moreover, $\Sigma^*$ denotes the set of all finite words over the alphabet $\Sigma$, while $\Sigma^+$ = $\Sigma^* \setminus \varepsilon$ is the set of all non-empty words.

An \emph{infinite word} over $\Sigma$ is an infinite sequence $\alpha = a_0 a_1 \dots$ of symbols $a_i \in \Sigma$ for $i \in \nat$.
We denote the $i$-th symbol of an infinite word $\alpha$ by $\alpha[i]$ and the finite infix of $\alpha$
from position $i$ up to (and excluding) position $j$ with $\alpha[i,j)=a_i a_{i+1}\cdots a_{j-1}$.  
We use the convention that $\alpha[i,j)=\varepsilon$ for any $i\geq j$.
Further, we denote the infinite suffix starting at position $j\in \nat$ by $\alpha[j,\infty)=a_j a_{j+1} \cdots$.
Given $u \in \Sigma^+$, the infinite repetition of $u$ is the infinite word $u^{\omega} = uu \dots \in \Sigma^{\omega}$.
An infinite word $\alpha$ is called ultimately periodic if it is of the form $\alpha = uv^{\omega}$ for a $u \in \Sigma^\ast$ and $v \in \Sigma^+$.
Finally, $\Sigma^{\omega}$ denotes the set of all infinite words over the alphabet $\Sigma$.

\subsubsection{Propositional Boolean Logic}
Since the presented algorithms rely on the Satisfiability (SAT) problem, as a prerequisite, we introduce Propositional Logic.
Let $Var$ be a set of propositional variables, which take Boolean values from $\mathbb{B} = \{0,1\}$ (0 representing $\lfalse$ and 1 representing $\ltrue$).
Formulas in propositional (Boolean) logic---which we denote by capital
Greek letters---are inductively constructed as follows:
\begin{itemize}
	\item each $x \in Var$ is a propositional formula; and
	\item if $\Psi$ and $\Phi$ are propositional formulas, so are $\lnot \Psi$ and $\Psi \lor \Phi$.
\end{itemize}

Moreover, we add syntactic sugar and allow the formulas $\ltrue$, $\lfalse$, $\Psi \land \Phi$,$\Psi \Rightarrow \Phi$, and $\Psi \Leftrightarrow \Phi$, which are defined as usual.
A propositional valuation is a mapping $v : \mathit{Var} \to \mathbb{B}$, which maps propositional variables to Boolean values.
The semantics of propositional logic is given by a satisfaction relation $\models$ that is inductively defined as follows: $v \models x$ if and only if
$v(x) = 1, v \models \lnot\Phi$ if and only if $v \nvDash \Phi$, and $v \models \Psi \lor \Phi$ if and only if $v \models \Psi$ or $v \models \Phi$.
In the case that $v \models \Phi$, we say that $v$ satisfies $\Phi$ and call it a \emph{model} of $\Phi$.
A propositional formula $\Phi$ is \emph{satisfiable} if there exists a model $v$ of $\Phi$.
The \emph{size} of a formula is the number of its subformulas (as defined in the usual way).
The satisfiability problem of propositional logic is the problem of deciding whether a given formula is satisfiable.
Although this problem is well-known to be NP-complete, modern SAT solvers implement optimized decision procedures that can check satisfiability of formulas with millions of variables~\cite{BalyoHJ17}.
Moreover, virtually all SAT solvers also return a model if the input-formula is satisfiable.\\

\subsubsection{Linear Temporal Logic} 

Linear Temporal Logic (LTL) is a logic to reason about sequences of relevant statements about a system by using temporal modalities.
Formally, given set of propositions $\prop$ that represent relevant statements about the system under consideration, 
an LTL formula---which we denote by small Greek letters---is defined inductively as follows:
\begin{itemize}
	\item each proposition $p \in \prop$ is an LTL formula; and
	\item if $\psi$ and $\varphi$ are LTL formulas, so are $\lnot \psi$, $\psi \lor \varphi$, $\lnext \psi$ (``neXt''), and $\psi \luntil \varphi$ (``Until'').
\end{itemize}

As syntactic sugar, we allow the formulas
$\ltrue$, $\lfalse$, $\psi \land \varphi$ and $\psi \rightarrow \varphi$, which are defined as usual.
Additionally, we allow the temporal formulas
$\leventually \psi := \ltrue \luntil \psi$ (``Finally'')
and $\lglobally \psi := \lnot \leventually \lnot \psi$ (``Globally'').
Further, we use $\ltlset$ to denote the set of all LTL formulas using.

LTL formulas are interpreted over infinite words $\alpha\in(2^{\prop})^\omega$.
To define how an LTL formula is interpreted on a word, we use a valuation function $V$.
This function maps an LTL formula and a word to a Boolean value and is defined inductively as follows: 
\begin{align*}
&V (p, \alpha) = 1 \text{ if and only if } p \in \alpha(0),\\
&V (\lnot \varphi, \alpha) = 1 - V (\varphi, \alpha),\\
&V (\varphi \lor \psi, \alpha) = \max \{V (\varphi, \alpha), V (\psi, \alpha)\},\\
&V (\lnext \varphi, \alpha) = V (\varphi, \alpha[1,\infty))\\
&V (\varphi \luntil \psi, \alpha) = \max_{i \geq 0} \{ \min \{V (\psi, \alpha [i,\infty)), \min_{0 \leq j < i} \{ V (\varphi, \alpha[j,\infty))\}\}\}
\end{align*}
We call $V(\varphi, \alpha)$ the \emph{valuation of $\varphi$ on $\alpha$} and say that $\alpha$ \emph{satisfies} $\varphi$ if $V (\varphi, \alpha) = 1$.
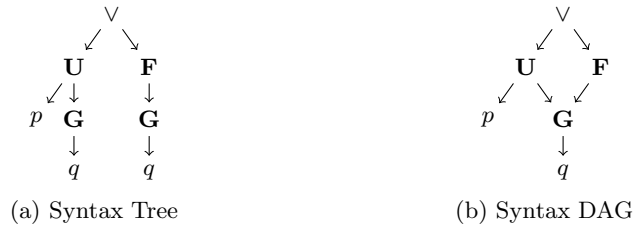
\begin{figure}
	\centering
	\subfloat[Syntax Tree\label{subfig:syntax-tree}]{
		\begin{minipage}{0.3\textwidth}
			\centering
			\begin{tikzpicture}
			\node (1) at (0, 0) {$\lor$};
			\node (2) at (-.5, -.7) {$\luntil$};
			\node (3) at (.5, -.7) {$\leventually$};
			\node (4) at (-1.0, -1.4) {$p$};
			\node (5) at (-0.5, -1.4) {$\lglobally$};
			\node (6) at (-0.5, -2.1) {$q$};
			\node (7) at (0.5, -1.4) {$\lglobally$};
			\node (8) at (0.5, -2.1) {$q$};
			\draw[->] (1) -- (2); 
			\draw[->] (1) -- (3);
			\draw[->](2) -- (4);
			\draw[->] (2) -- (5);
			\draw[->] (3) -- (7);
			\draw[->] (5) -- (6);
			\draw[->] (7) -- (8);
			\end{tikzpicture}
		\end{minipage}
	}\hspace{2cm}
	\subfloat[Syntax DAG\label{subfig:syntax-dag}]{
		\begin{minipage}{0.3\textwidth}
			\centering
			\begin{tikzpicture}
			\node (1) at (0, 0) {$\lor$};
			\node (2) at (-.5, -.7) {$\luntil$};
			\node (3) at (.5, -.7) {$\leventually$};
			\node (4) at (-1.0, -1.4) {$p$};
			\node (5) at (0, -1.4) {$\lglobally$};
			\node (6) at (0, -2.1) {$q$};
			\draw[->] (1) -- (2); 
			\draw[->] (1) -- (3);
			\draw[->](2) -- (4);
			\draw[->] (2) -- (5);
			\draw[->] (3) -- (5);
			\draw[->] (5) -- (6);
			\end{tikzpicture}
		\end{minipage}
	}
	\caption{Representation of LTL formulas}
	\label{fig:ltl-representation}
	\vspace{-0.4cm}
\end{figure}

For a precise representation of LTL formulas, we use syntax trees and syntax DAGs.
Syntax DAGs are directed acyclic graphs (DAG) that one obtains from a syntax tree by merging the common subformulas, providing a canonical representation of LTL formulas. 
Figure~\ref{fig:ltl-representation} illustrates the syntax tree and syntax DAG of the formula $(p\luntil \lglobally q)\lor \leventually\lglobally q$.

The size of an LTL formula $\size{\varphi}$ is defined as the number of its unique subformulas, which corresponds to the number of nodes in the syntax DAG of $\varphi$. For example, the size of the formula $(p\luntil \lglobally q)\lor \leventually\lglobally q$ is 6, as can be seen easily in Figure~\ref{subfig:syntax-dag}.

We denote the set of all LTL operators as $\operators=\prop\cup\unaryOp\cup\binaryOp$. 
Here, the propositions are the nullary operators, 
$\unaryOp=\{\neg,\lnext,\leventually,\lglobally\}$ are the unary operators and $\binaryOp=\{\lor,\land,\luntil\}$ are the binary operators of LTL.
Further, let $\ltlset$ denote the set of all LTL formulas.


\section{Problem Formulation}
\label{sec:problem}


Since the problem of \sketchprob{} relies heavily on LTL sketches, we begin with formalizing them first. 
\subsubsection{LTL Sketch}
An \emph{LTL sketch} is incomplete LTL formulas in which parts that are difficult to formalize can be left out.
The left-out parts are represented using placeholders, denoted by $?$'s as can be seen in Figure~\ref{fig:ltl-sketch}.
We comment on the superscripts on the placeholders in a moment.

\begin{figure}
	\centering
	\begin{tikzpicture}
	\node (1) at (0, 0) {$\placeholder^2$};
	\node (2) at (-.5, -.7) {$\luntil$};
	\node (3) at (.5, -.7) {$\placeholder^1$};
	\node (4) at (-1.0, -1.4) {$\placeholder^0$};
	\node (5) at (0, -1.4) {$\lglobally$};
	\node (6) at (0, -2.1) {$q$};
	\draw[->] (1) -- (2); 
	\draw[->] (1) -- (3);
	\draw[->](2) -- (4);
	\draw[->] (2) -- (5);
	\draw[->] (3) -- (5);
	\draw[->] (5) -- (6);
	\end{tikzpicture}
	\caption{An LTL sketch}
	\label{fig:ltl-sketch}
\end{figure}
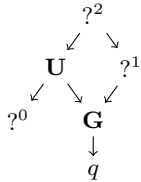

Formally, an LTL sketch $\sketch$ is simply an LTL formula whose syntax is augmented with placeholders.
The placeholders we allow can be of three types:
placeholders of arity zero referred to as Type-0 placeholders, that replace missing LTL formulas;
placeholders of arity one referred to as Type-1 placeholders, that replace missing unary operators; 
and placeholders of arity two referred to as Type-2 placeholders, that replace missing binary operators.
In the Figure~\ref{fig:ltl-sketch} (and in the rest of the paper), Type-$i$ placeholders are represented using $?^i$. 

Given (possibly empty) sets $\placeholderset^0$, $\placeholderset^1$ and $\placeholderset^2$ consisting of Type-0, Type-1 and Type-2 placeholders, respectively, we define LTL-sketches inductively as follows:
\begin{itemize}
	\item each element of $\prop \cup \placeholderset^0$ is an LTL-sketch; and
	\item if $\sketch_1$ and $\sketch_2$ are LTL-sketches, $\unaryvar\sketch_1$ is an LTL-sketch for $\unaryvar\in\unaryOp\cup\placeholderset^1$ and so is $\sketch_1\binaryvar\sketch_2$ for $\binaryvar\in\binaryOp\cup\placeholderset^2$.
\end{itemize}
Note that an LTL sketch in which $\placeholderset^0=\placeholderset^1=\placeholderset^2=\emptyset$ is simply an LTL formula.
Furthur, let $\placeholderset_{\sketch}=\placeholderset^0\cup\placeholderset^1\cup\placeholderset^2$
denote the set of all placeholders in an sketch $\sketch$.
For the sketch in Figure~\ref{fig:ltl-sketch}, $\placeholderset_\sketch=\{\placeholder^0, \placeholder^1, \placeholder^2\}$.
For brevity, in the rest of the paper, we refer an LTL sketch as a sketch. 

The placeholders are abstract symbols that apriori do not have meaning.
To assign meaning to a sketch, we need to substitute all Type-0 placeholders with LTL formulas, all Type-1 placeholders with unary operators and all Type-2 placeholders with binary operators.
We do this using a so-called substitution function (or substitution for short).

Formally, a \emph{substitution} function $s$ maps placeholders and operators present in a sketch to LTL operators and LTL formulas in such a way that: $s(\placeholder)\in\ltlset$ if $\placeholder\in\placeholderset^0$; $s(\placeholder)\in\Lambda_{U}$ if $\placeholder\in\placeholderset^1$; $s(\placeholder)\in\Lambda_{B}$ if $\placeholder\in\placeholderset^2$; and $s(\lambda)=\lambda$ for any LTL operator $\lambda\in\Lambda$.
Morever, a substitution $s$ is said to be \emph{complete} for an sketch $\sketch$ if 
$s$ is defined for every element in $\Lambda\cup\Pi_{\sketch}$ in $\sketch$.

A complete substitution $s$ can be applied to a sketch $\sketch$ to obtain an LTL formula.
To make this mathematically precise, we define a function $f_s$, which is defined recursively on the structure of $\sketch$ as follows:

\begin{align*}
f_s(\sketch_1  \mathbin{\placeholder^2} \sketch_2) &= f_s(\sketch_1) \mathbin{\circ} f_s(\sketch_2), \text{ where } \circ = s(\placeholder^2);\\
f_s(\mathbin{\placeholder^1} \sketch)) &= \unaryvar f_s(\sketch),\text{ where } \circ=s(\placeholder^1); \text{ and }\\
f_s(\placeholder^0) &= s(\placeholder^0).
\end{align*}

\subsubsection{Input sample} 
While there are many ways to fill a sketch, as alluded in the introduction, we rely on two finite, disjoint sets $P,N \subset (2^{\mathcal{\prop}})^{\omega}$ of ultimately periodic words, such that $P\cap N=\emptyset$.
The words in $P$, referred to as the positive examples, represent the desirable system executions that must be allowed by the resulting specification. 
On the other hand, the words in $N$, referred to as the negative examples, represent the undesirable system executions that must be disallowed by the resulting specification.
We call the pair $\sample = (P,N)$ a sample.
Since our sample consists of ultimately periodic words, we assume that they are stored as pairs $(u, v)$ of finite words, where $u \in (2^\prop)^{*}$ and $v \in (2^\prop)^{+}$.
Moreover, let the size of a sample be $|\sample|= \sum\limits_{uv^\omega\in P \cup N} |uv|$. 

We say that an LTL formula $\varphi$ is \emph{consistent} with a sample $\sample = (P,N)$ if $V (\varphi,uv^{\omega}) = 1$ for each $uv^\omega \in P$ (i.e., all positive words satisfy $\varphi$) and $V(\varphi,uv^{\omega}) = 0$ for each $uv^\omega \in N$ (i.e., all negative examples do not satisfy $\varphi$);

\subsubsection{The LTL sketching problem}
Having defined the setting, the main problem that we deal with in the paper is the following: 
\begin{problem}[\sketchprob]\label{prb:tl-sketching}
Given an LTL sketch $\sketch$ and a sample $\sample = (P,N)$, find a complete substitution $s$ for $\sketch$ such that $f_s(\sketch)$ is consistent with $\sample$ 
\end{problem}

Before we address the above (and the more general) problem, let us consider a simpler version of the problem:
\begin{problem}[\learnprob]\label{prb:tl-learning}
Given a sample $\sample = (P,N)$, find a minimal LTL formula $\varphi$ such that $\varphi$ is consistent with $\sample$ 
\end{problem}
The above problem is a restricted version of Problem~\ref{prb:tl-sketching}, where the sketch $\sketch=\placeholder^0$ (i.e., a single Type-0 placeholder).
Recently, it has attracted a lot of attention in the recent years due to its application in Explainable AI and Specification Mining.
Along with its theoretical analysis~\cite{abs-2102-00876}, there have been a number of efficient algorithms~\cite{abs-2110-06726,NeiderG18,CamachoM19} to solve it.

For the \learnprob{} problem, one can always find a solution due to the existence of a generic LTL formula that is consistent with a given sample, as indicated by the following remark:
\begin{remark}\label{rem:generic-formula}
Given sample $\sample$, there exists an LTL formula $\varphi$ such that $\varphi$ is consistent with $\sample$ and $|\varphi|$ is of size $\mathcal{O}(|S|^4)$.
\end{remark}
To construct such a formula, one needs to perform the following two steps. 
First, construct formulas $\varphi_{\alpha,\beta}$ for all $\alpha\in P$ and $\beta\in N$, such that $V(\varphi_{\alpha,\beta}, \alpha)=1$
and $V(\varphi_{\alpha,\beta}, \beta) = 0$, using a sequence of $\lnext$-operators and an appropriate propositional formula to describe the first symbol where $\alpha$ and $\beta$ differ.
Second, construct the generic consistent formula $\varphi= \bigvee_{\alpha\in P}\bigwedge_{\beta\in N} \varphi_{\alpha,\beta}$.

For the \sketchprob{} problem, however, one may not always find a solution.
This is illustrated using the following example.
\begin{example}\label{eg:formula-non-existence}
Consider the LTL-sketch $\lglobally(\placeholder^0)$ and a sample consisting of a single positive word $\alpha=\{p\}\{q\}^\omega\in P$ and a single negative word $\beta=\{q\}^\omega$.
For this LTL-sketch and sample, there does not exist any substitution that leads to an LTL specification consistent with the sample, which can be shown using contradiction. 
Towards contradiction, assume that there exists a specification $\lglobally(\varphi)$ such that $V(\alpha,\lglobally(\varphi))=1$ and $V(\beta,\lglobally(\varphi))=0$. 
Based on the semantics of $\lglobally$-operator, $V(\alpha,\lglobally(\varphi))=V(\alpha[1,\infty),\lglobally(\varphi))=1$.
On the other hand, $V(\beta,\lglobally(\varphi))=V(\alpha[1,\infty),\lglobally(\varphi))=0$ since $\beta=\alpha[1,\infty)$ leading to a contradiction.
\end{example}

Since there might not exist any complete substitution for a given LTL sketch and a sample, apriori, it is unclear whether the LTL sketching problem is solvable by any terminating algorithm.
However, we show in Section~\ref{sec:theoritical-results} that it is indeed possible to find a terminating algorithm for the \sketchprob{} problem. Later in Section~\ref{sec:algorithms}, we exploit this result to find suitable substitutions to sketches.


\section{Existence of a Complete Sketch}
\label{sec:theoritical-results}

To devise a terminating algorithm for the \sketchprob{} problem, we first introduce the related decision problem, which is the following:
\begin{problem}[\existprob]\label{prb:sketch-exist}
	Given an LTL-sketch $\sketch$ and a sample $\sample = (P,N)$, does there exist a complete substitution $s$ for $\sketch$ such that $f_s(\sketch)$ is consistent with $\sample$.
\end{problem}
In this section, we prove that this problem is indeed decidable and belongs to the complexity class \NP{}. 
Later we also provide a decision procedure based on SAT solving to decide the problem.

\subsection{The decidability result}\label{subsec:exists-prob-decidability}
For the decidability result, we introduce some concepts as a preparation. We begin by observing a key property of ultimately periodic words, which is as follows:
\begin{observation}\label{obs:repeating-suffix}
	Let $uv^\omega \in (2^\prop)^\omega$ and $\varphi$ be an LTL formula.
	Then, $uv^\omega[\abs{u} + i, \infty) = uv^\omega[\abs{u} + j, \infty)$ for $j \equiv i \mod \abs{v}$.
	Thus, $V(\varphi, uv^\omega[\abs{u} + i, \infty))=V(\varphi, uv^\omega[\abs{u}+j, \infty))$.
\end{observation}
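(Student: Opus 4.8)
The plan is to prove the observation in two parts, corresponding to its two sentences. The first part is a purely combinatorial claim about the structure of ultimately periodic words: the suffixes of $uv^\omega$ starting at positions $\abs{u}+i$ and $\abs{u}+j$ coincide whenever $i \equiv j \pmod{\abs{v}}$. The second part then derives the equality of LTL valuations as an immediate corollary, since the valuation function $V(\varphi, \cdot)$ depends only on the word it is applied to.

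First I would establish the word-equality $uv^\omega[\abs{u}+i, \infty) = uv^\omega[\abs{u}+j, \infty)$. The key point is that once we pass the finite prefix $u$, the remaining word is the purely periodic word $v^\omega$ shifted into position; that is, $uv^\omega[\abs{u}+i, \infty) = v^\omega[i, \infty)$ for every $i \geq 0$, which follows directly from the definition of the suffix operator and the fact that $uv^\omega[\abs{u}, \infty) = v^\omega$. It then suffices to show $v^\omega[i, \infty) = v^\omega[j, \infty)$ whenever $i \equiv j \pmod{\abs{v}}$. Writing $v = a_0 \cdots a_{\abs{v}-1}$, the symbol at position $k$ of $v^\omega$ is $a_{k \bmod \abs{v}}$ by definition of the infinite repetition. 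Hence for any offset $\ell \geq 0$, the $\ell$-th symbol of $v^\omega[i,\infty)$ is $a_{(i+\ell) \bmod \abs{v}}$ and that of $v^\omega[j,\infty)$ is $a_{(j+\ell) \bmod \abs{v}}$; since $i \equiv j \pmod{\abs{v}}$ gives $i + \ell \equiv j + \ell \pmod{\abs{v}}$, these two symbols agree for every $\ell$, so the two infinite suffixes are identical.

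Given the word-equality, the valuation statement is immediate: since $V(\varphi, \cdot)$ is a function of its word argument (as defined inductively in Section~\ref{sec:preliminaries}), applying it to two equal words yields the same value, so $V(\varphi, uv^\omega[\abs{u}+i, \infty)) = V(\varphi, uv^\omega[\abs{u}+j, \infty))$. No induction on $\varphi$ is needed here; the point is precisely that equal inputs force equal outputs.

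I do not anticipate a genuine obstacle, as the statement is essentially definitional. The only place demanding mild care is bookkeeping with the suffix notation and the modular index arithmetic: one must be careful that the convention $\alpha[i,\infty)$ shifts correctly and that the reduction $uv^\omega[\abs{u}+i,\infty) = v^\omega[i,\infty)$ is justified cleanly before invoking periodicity. Once that reduction is in hand, the modular argument on the symbols of $v^\omega$ is routine and the corollary about valuations follows with no further work.
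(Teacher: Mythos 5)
Your proof is correct and matches the reasoning the paper leaves implicit: the paper states this as an Observation with no proof at all, treating it as immediate from the definitions. Your two-step argument---first the word equality via the reduction $uv^\omega[\abs{u}+i,\infty) = v^\omega[i,\infty)$ and modular arithmetic on the symbol indices of $v^\omega$, then the valuation equality because $V(\varphi,\cdot)$ is a function of its word argument and thus needs no induction on $\varphi$---is exactly the justification the paper intends.
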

Observation~\ref{obs:repeating-suffix} indicates that, for a word $uv^\omega$, there exists only a finite number of distinct suffixes of $uv^\omega$,
all of which originate in the initial $uv$ portion of $uv^\omega$.
We now define the set $\suf{uv^\omega}=\{uv^\omega[i,\infty)~|~ 0\leq i<\abs{uv} \}$ of all (possibly) distinct suffixes of a word $uv^\omega$.
Moreover, we define $\suf{\sample}=\bigcup_{uv^\omega\in(P\cup N)} \suf{uv^\omega}$ to be the set of all suffixes of words in $\sample$. 
Observation~\ref{obs:repeating-suffix} further indicates that, to determine the evaluation of an LTL formula $\varphi$ on an ultimately periodic word $uv^\omega$, it is sufficient to determine its evaluation on the initial $\abs{uv}$ suffixes of $uv^\omega$.

Thus, for a compact representation of the evaluation of $\varphi$ on $uv^\omega$, we introduce a table notation $\sattable{\varphi}{uv^\omega}$.
A table $\sattable{\varphi}{uv^\omega}$ is a $\abs{\varphi} \times \abs{uv}$ matrix that consists of the satisfaction of all the subformulas $\varphi'$ of $\varphi$ on the suffixes of $\{uv^\omega\}$.
We define the entries of the matrix as follows:
\begin{align*}
\sattable{\varphi}{uv^\omega}[\varphi',t] = V(\varphi', uv^\omega[t,\infty)) \text{ for all subformulas } \varphi' \text{ of } \varphi \text{ and } 0\leq t < \abs{uv}
\end{align*}


Based on the above definition of the table $\sattable{\varphi}{uv^\omega}$, we identify three properties of these tables. 
These properties form important building blocks of the decidability proof (i.e., proof of Theorem~\ref{thm:NP-exist}) as we see later.

The first property, which we refer to as \emph{Semantic} property, is that various rows of table are related to each other in a way that reflects the semantics of LTL.
We formalize a row of a table using the notation $\sattable{\varphi}{uv^\omega}[\varphi', \cdot]$, which refers to the row of $\sattable{\varphi}{uv^\omega}$ corresponding to subformula $\varphi'$.

Let us first demonstrate this property on a running example.
Consider the formula $\psi=p\vee \lnext q$ and the word $\alpha=\{p,q\}\{p\}\{q\}^\omega$.
Figure~\ref{fig:tables-for-p-and-Xq} illustrates the table $\sattable{\psi}{\alpha}$.
From the figure, one can see that the row $\sattable{\psi}{\alpha}[p\vee \lnext q,\cdot]$ corresponds to the bitwise-OR of the rows $\sattable{\psi}{\alpha}[p,\cdot]$ and $\sattable{\psi}{\alpha}[\lnext q, \cdot]$.
\begin{figure}
	\centering
	\setlength{\tabcolsep}{5pt}
	\renewcommand{\arraystretch}{1.25}
	\begin{tabular}{ | c | c | c | c |}
		\hline
		 & 0 & 1 & 2  \\
		\hline
		$p$ & 1 & 1 & 0\\
		\hline
		$q$ & 1 & 0 & 1\\
		\hline
		$\lnext q$ & 0 & 1 & 1 \\
		\hline
		$p\vee \lnext q$ & 1 & 1 & 1\\
		\hline      
	\end{tabular}
	\caption{Bit-vectors for the subformulas of $p\vee \lnext q$}
	\label{fig:tables-for-p-and-Xq}
\end{figure}

To formalize the relation between the rows corresponding to different subformulas $\varphi'$ of $\varphi$, we must uniquely identify the subformulas.
To this end, given an LTL formula $\varphi$, we assign unique identifiers $i\in\{1,\cdots, n\}$ to each node of the syntax DAG of $\varphi$.
This enables us to denote the subformula of $\varphi$ rooted at Node~$i$ with $\varphi[i]$.
For assigning identifiers, we follow the strategy that: we assign the root node with 1; and for every node, we assign an identifier smaller than its children, if it has any. 
Note that one can analogously assign labels to syntax DAGs of sketches.
Figure~\ref{fig:sketch-label} demonstrates identifiers for the formula $(p\luntil\lglobally q)\land(\leventually(\lglobally q))$.
We further define a function $\ell\colon\{1,\cdots,n\}\mapsto\Lambda$ that maps the identifiers to the corresponding operators in the syntax DAG.
\begin{figure}
	\centering
	\subfloat[Syntax DAG \label{subfig:sketch-dag}]{
		\begin{minipage}{0.3\textwidth}
			\centering
			\begin{tikzpicture}
			\node (1) at (0, 0) {$\land$};
			\node (2) at (-.5, -.7) {$\luntil$};
			\node (3) at (.5, -.7) {$\leventually$};
			\node (4) at (-1.0, -1.4) {$p$};
			\node (5) at (0, -1.4) {$\lglobally$};
			\node (6) at (0, -2.1) {$q$};
			\draw[->] (1) -- (2); 
			\draw[->] (1) -- (3);
			\draw[->](2) -- (4);
			\draw[->] (2) -- (5);
			\draw[->] (3) -- (5);
			\draw[->] (5) -- (6);
			\end{tikzpicture}
		\end{minipage}
	}\hspace{2cm}
	\subfloat[Labeling\label{subfig:sketch-label-label}]{
		\begin{minipage}{0.3\textwidth}
			\centering
			\begin{tikzpicture}
			\node (1) at (0, 0) {$1$};
			\node (2) at (-.5, -.7) {$2$};
			\node (3) at (.5, -.7) {$3$};
			\node (4) at (-1.0, -1.4) {$4$};
			\node (5) at (0, -1.4) {$5$};
			\node (6) at (0, -2.1) {$6$};
			\draw[->] (1) -- (2); 
			\draw[->] (1) -- (3);
			\draw[->](2) -- (4);
			\draw[->] (2) -- (5);
			\draw[->] (3) -- (5);
			\draw[->] (5) -- (6);
			\end{tikzpicture}
		\end{minipage}
	}
	\caption{Labeling for formula $(p\luntil\lglobally q)\land(\leventually(\lglobally q))$.}
	\label{fig:sketch-label}
\end{figure}

We now describe the set of equations that formalize the relation between the rows.
How a row $\sattable{\varphi}{uv^\omega}[\varphi[i],\cdot]$ corresponding to Node~$i$ relates to the others depends the operator $\ell(i)$ in the root node of $\varphi[i]$.
Thus, we list the relation separately for different LTL operators.

If $\ell(i)=p$ for some propostion $p$, then we have the following relation:
\begin{align}\label{eq:prop-table-semantics}
\sattable{\varphi}{uv^\omega}[\varphi[i],t] = 
\begin{cases}
1 \text{ if } p\in uv^\omega[t]\\
0 \text{ otherwise}
\end{cases}
\end{align}

If $\ell(i)$ is a unary operator and Node~$j$ is the left child of Node~$i$, we have the following relations:
\begin{align}
&\text{ if }\ell(i)=\neg:\ \sattable{\varphi}{uv^\omega}[\varphi[i],t] = 1- \sattable{\varphi}{uv^\omega}[\varphi[j],t] \ \text{ for } 0\leq t < \abs{uv}\label{eq:table-relation-neg}\\
&\text{ if }\ell(i)=\lnext:\ \sattable{\varphi}{uv^\omega}[\varphi[i],t] =
\begin{cases}
\sattable{\varphi}{uv^\omega}[\varphi[j],t+1] \ \text{ for } 0\leq t < \abs{uv}-1\\
\sattable{\varphi}{uv^\omega}[\varphi[j],\abs{u}] \ \text{ for } t=\abs{uv}-1
\end{cases}\label{eq:table-relation-next}
\end{align}
While Equation~\ref{eq:table-relation-neg} simply follows from the semantics of $\neg$-operator, Equation~\ref{eq:table-relation-next} for the $\lnext$-operator exploits Observation~\ref{obs:repeating-suffix} along with its semantics.
In particular, the entry $\sattable{\varphi}{uv^\omega}[\varphi[i],\abs{uv}-1]$ relies on the evaluation of $\varphi[j]$ on $uv^\omega[\abs{u},\infty)$.

If $\ell(i)$ is a binary operator, and Node~$j$ and Node~$j'$ are the left and right children of Node~$i$, respectively, then we have the following relation:
\begin{align}
&\text{ if }\ell(i)=\lor: \sattable{\varphi}{uv^\omega}[\varphi[i],t] = \sattable{\varphi}{uv^\omega}[\varphi[j],t] \lor \sattable{\varphi}{uv^\omega}[\varphi[j'],t] \text{ for } 0\leq t < \abs{uv}\label{eq:table-relation-lor}\\
&\text{ if }\ell(i)=\luntil:\ \sattable{\varphi}{uv^\omega}[\varphi[i],t] = \label{eq:table-relation-until}\\
&\begin{cases}
\bigvee_{t\leq t''<\abs{uv}}\Big[\sattable{\varphi}{uv^\omega}[\varphi[j'],t'']\wedge \bigwedge_{t\leq t'<t''}\sattable{\varphi}{uv^\omega}[\varphi[j],t']\Big] \text{ for } 0\leq t < \abs{u} \nonumber\\
\bigvee_{\abs{u}\leq t''<\abs{uv}}\Big[\sattable{\varphi}{uv^\omega}[\varphi[j'],t'']\wedge \bigwedge_{t'\in t\looparrowright_{u,v} t''}\sattable{\varphi}{uv^\omega}[\varphi[j],t']\Big] \text{ for } \abs{u}\leq t < \abs{uv} \nonumber
\end{cases}
\end{align}
Again, one can see that Equation~\ref{eq:table-relation-lor} follows from the semantics of the $\lor$-operator.
Equation~\ref{eq:table-relation-until} for the $\luntil$-operator
consists of two cases: the first case provides the relation for entries $t \in \{0,\cdots,\abs{u}-1\}$ in the
initial part u; the second case covers the entries
$t\in\{\abs{u},\cdots,\abs{uv}-1\}$ in the periodic part $v$. 
Thereby, the second case uses the periodic nature of $uv^\omega$ to ``loop back'' into the periodic part $v$ using the set $t\looparrowright_{u,v} t''$ defined as the follows:
\begin{align*}
t \looparrowright_{u,v} t'' = 
\begin{cases}
\{t,\cdots, t''-1\}  \text{ if } t < t''; \\
 \{\abs{u},\cdots, t''-1, t, \cdots, \abs{uv}-1 \} \text{ if } t'' \leq t
\end{cases}
\end{align*}





Having defined the Semantic property, let us now describe the second property, the \emph{Consistency} property. 
This property ensures that $\sattable{\varphi}{uv^\omega}[\varphi,0]=1$ if and only if $uv^\omega$ satisfies $\varphi$.
Thus, for an LTL formula $\varphi$ consistent with $\sample$, we have the following relation:
\begin{align}\label{eq:table-consistency}
	\sattable{\varphi}{uv^\omega}[\varphi,0]=1 \text{ for all } uv^\omega\in P, \text{ and } \sattable{\varphi}{uv^\omega}[\varphi,0]=0 \text{ for all } uv^\omega\in N
\end{align}

The final property we observe is called the \emph{Suffix} property. 
This property originates from the fact that, LTL, being a future-time logic, has the same evaluation on equal suffixes, i.e., $V(\varphi,u_1v_1^\omega[t,\infty)) = V(\varphi,u_2v_2^\omega[t',\infty))$ for $u_1v_1^\omega[t,\infty)=u_2v_2^\omega[t',\infty)$.
Formally, we state the property as follows:
\begin{align}\label{eq:table-suffix-property}	
\sattable{\varphi}{u_1{v_1}^\omega}[\varphi,t]=\sattable{\varphi}{u_2{v_2}^\omega}[\varphi,t'] \text{ for all } u_1v_1^\omega[t,\infty)=u_2v_2^\omega[t',\infty)
\end{align}
This property becomes significant later for constructing LTL formulas for Type-0 placeholders.


Having set up the prerequisites, we now proceed to provide an \NP{} algorithm to decide the \existprob{} problem.
For ease of presentation, we demonstrate the algorithm on a simple (but crucial) case in which $\sketch$ consists of only a single Type-0 placeholder $\placeholder^0$.
For this case, one might assume that non-deterministically guessing a substitution for the placeholder should suffice.
However, apriori one does not know the size of the LTL formula that is necessary to substitute the Type-0 placeholder.

Thus, in our $\NP$ algorithm, instead of guessing substitutions, we guess all the entries of the table $\sattable{\sketch}{uv^\omega}$ for each $uv^\omega$ in $\sample$.
Note that the tables have a finite dimension, precisely $\abs{\sketch}\times\abs{uv}$, for each word $uv^\omega$.
Thus, the overall process of simply guessing all the table entries can be done in time $\mathcal{O}({\poly(\abs{\sketch},\abs{\sample})})$.

After guessing the table entries, we must verify that the guessed tables satisfy the three properties, Semantic, Consistency, and Suffix, discussed earlier in this section.
It is easy to verify that checking the first two properties for the tables require time $\mathcal{O}(\poly(\abs{{\sketch}},{\abs{uv}}))$ for each $uv^\omega$ in $\sample$.
For checking the Suffix property, one must identify the equal suffixes in $\suf{\sample}$.
This fact that this can be also done in time $\poly(\abs{\sample})$, is a consequence of Lemma~\ref{lem:sketch-existence-formula}, stated below.
Intuitively, the result states that two suffixes are equal if they are equal only on a finite portion $b$ of size $\poly(\abs{u_1}, \abs{u_2}, \abs{v_1}, \abs{v_2})$. 
\begin{lemma}
	$u_1v_1^\omega[t,\infty)=u_2v_2^\omega[t',\infty)$ if and only if $u_1v_1^\omega[t,t+b)=u_2v_2^\omega[t',t'+b)$, where $b=\max(\abs{u_1[t,\abs{u_1})}, \abs{u_2[t',\abs{u_2})})+\lcm(\abs{v_1}, \abs{v_2})$.
\end{lemma}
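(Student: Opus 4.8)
The plan is to prove the two implications separately, with the forward direction being immediate and the backward direction carrying all of the content. For the forward direction, if the infinite suffixes $u_1v_1^\omega[t,\infty)$ and $u_2v_2^\omega[t',\infty)$ are literally equal, then all of their finite prefixes agree too, in particular those of length $b$; no argument is needed. So the work lies entirely in showing that agreement on a window of length $b=\max(\abs{u_1[t,\abs{u_1})}, \abs{u_2[t',\abs{u_2})})+\lcm(\abs{v_1},\abs{v_2})$ already forces the two infinite suffixes to coincide.

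For the backward direction, I would first make the eventual periodicity of each suffix precise. Writing $\sigma_1 = u_1v_1^\omega[t,\infty)$ and $\sigma_2 = u_2v_2^\omega[t',\infty)$, and setting $m_1 = \abs{u_1[t,\abs{u_1})}$ and $m_2 = \abs{u_2[t',\abs{u_2})}$ (the lengths of the remaining transients), a one-line index calculation shows $\sigma_1[i+\abs{v_1}]=\sigma_1[i]$ for all $i \geq m_1$ and $\sigma_2[i+\abs{v_2}]=\sigma_2[i]$ for all $i \geq m_2$. Hence beyond position $M := \max(m_1,m_2)$ both suffixes are purely periodic, with periods $\abs{v_1}$ and $\abs{v_2}$ respectively.

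The crux is then a periodicity argument on the agreement pattern. I would set $L = \lcm(\abs{v_1},\abs{v_2})$, so that $b = M + L$, and define the indicator $e(i)=1$ iff $\sigma_1[i]=\sigma_2[i]$. Since $L$ is a common multiple of both periods, the previous step gives $\sigma_1[i+L]=\sigma_1[i]$ and $\sigma_2[i+L]=\sigma_2[i]$ for all $i\geq M$, and therefore $e(i+L)=e(i)$ for all $i\geq M$: the equality pattern is itself $L$-periodic past $M$. Now assuming $\sigma_1[0,b)=\sigma_2[0,b)$, i.e. $e(i)=1$ for $0\leq i<M+L$, the full block $M\leq i<M+L$ is covered, so $L$-periodicity propagates $e(i)=1$ to every $i\geq M$; combined with $e(i)=1$ on the transient $0\leq i<M$ (which lies inside the window since $M\leq b$), this yields $e(i)=1$ for all $i$, that is $\sigma_1=\sigma_2$.

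I expect the only delicate point to be the bookkeeping of the transient lengths and, in particular, the edge case $t\geq\abs{u_1}$ (or $t'\geq\abs{u_2}$), where the remaining transient is empty. The convention $\alpha[i,j)=\varepsilon$ for $i\geq j$ makes $\abs{u_1[t,\abs{u_1})}=0$ in exactly this case, so $m_1=0$ and the periodic regime starts immediately; this lets the same formula for $M$ absorb all cases without a separate analysis. Beyond this, the argument is the routine observation that one full $\lcm$-window placed after the longest transient is precisely what periodicity needs in order to force agreement on all remaining positions.
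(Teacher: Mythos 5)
Your proof is correct, and it rests on the same core insight as the paper's — after the longer remaining transient, both suffixes are periodic with the common period $L=\lcm(\abs{v_1},\abs{v_2})$, so one full window of length $L$ past the transient pins down everything — but the two proofs conclude in genuinely different ways. The paper fixes (WLOG) $\mu=\abs{u_1[t,\abs{u_1})}\geq\abs{u_2[t',\abs{u_2})}$ and explicitly \emph{reconstructs} both suffixes in the form ``transient $\cdot$ window$^\omega$'': it observes that the length-$L$ window after position $\mu$ equals $v_1^{\kappa}$ for the first word and a power of a cyclic rotation of $v_2$ for the second, so that taking the $\omega$-power of the (equal) windows and prepending the (equal) transients recovers both infinite suffixes, which are therefore equal. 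You instead work with the agreement indicator $e(i)$ and show it is $L$-periodic beyond $M=\max(m_1,m_2)$, so that agreement on the block $[M,M+L)$ propagates to all positions; this avoids the WLOG, the $\omega$-power manipulation, and the bookkeeping about the second window being a rotation of $v_2$, making the argument a bit more elementary and uniform (your remark that the convention $\alpha[i,j)=\varepsilon$ for $i\geq j$ absorbs the $t\geq\abs{u_1}$ edge case applies equally to both proofs). What the paper's route buys in exchange is an explicit ultimately-periodic representation of the common suffix, which is slightly more constructive; what yours buys is a shorter, rotation-free propagation argument of the kind standard in combinatorics on words.
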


\begin{proof}
	First, let us consider $u_1v_1^\omega[t,\infty)=u_2v_2^\omega[t',\infty)$. 
	Clearly, all prefixes of $u_1v_1^\omega[t,\infty)$ and $u_2v_2^\omega[t',\infty)$ are equal, i.e., $u_1v_1^\omega[t,t+b)=u_2v_2^\omega[t',t'+b)$ for all $b\in\nat$.
	
	For the other direction, we consider $u_1v_1^\omega[t,t+b)=u_2v_2^\omega[t',t'+b)$, for $b=\max(\abs{u_1[t,\abs{u_1})}, \abs{u_2[t',\abs{u_2})})+\lcm(\abs{v_1}, \abs{v_2})$.
	Without loss of generality, let us assume that $\abs{u_1[t,\abs{u_1})}\geq\abs{u_2[t',\abs{u_2})}$.
	To avoid clutter of notation, we denote $\mu=\abs{u_1[t,\abs{u_1})}$ and $\nu=\lcm(\abs{v_1}, \abs{v_2})$.
	Thus, in this case, $b=\mu+\nu$.
	
	The proof, now, is based on two main observations.
	First, we begin with the simple observation: 
	\begin{align*}
	u_1v_1^\omega[t,t+\mu) &= u_2v_2^\omega[t',t'+\mu); \text{ and }\\ u_1v_1^\omega[t+\mu,t+b) &= u_2v_2^\omega[t'+\mu,t'+b)
	\end{align*}
	Second, we have that 
	\begin{align*}
	(u_1v_1^\omega[t+\mu,t+b))^\omega &= v_1^\omega; \text{ and } \\
	(u_2v_2^\omega[t'+\mu,t'+b))^\omega &= (v_2^\omega[t'+\mu, t'+\mu+\abs{v_2}))^\omega
	\end{align*}
	The above observation is due to the fact that $u_1v_1^\omega[t+\mu,t+b) = v_1^{\kappa}$ for $\kappa=\nu/\abs{v_1}$ and $u_2v_2^\omega[t'+\mu,t'+b) = (v_2^\omega[t'+\mu, t'+\mu+\abs{v_2}))^\kappa$ for $\kappa=\nu/\abs{v_2}$.
	
	Now, combining the two observations, we have the following:
	\begin{align*}
	u_1v_1^\omega &= u_1v_1^\omega[t,t+\mu)\cdot (u_1v_1^\omega[t+\mu,t+b))^\omega\\ 
	&= u_2v_2^\omega[t',t'+\mu)\cdot (u_2v_2^\omega[t'+\mu,t'+b))^\omega\\ 
	&= u_2[t',\abs{u_2}) \cdot u_2v_2^\omega[\abs{u_2}, t'+\mu)\cdot (v_2^\omega[t'+\mu, t'+\mu+\abs{v_2}))^\omega\\ 
	&= u_2v_2^\omega
	\end{align*}
	\qed
\end{proof}

We now prove that if the guessed tables satisfy the three properties, then there exists a LTL formula $\psi$ that one can replace in the Type-0 placeholder to obtain a consistent LTL formula. 
This fact is asserted by the following theorem.
\begin{theorem}
	Let $\sample=(P,N)$ be a sample, $\sketch$ be a sketch with only Type-0 placeholders and tables  $\sattable{\sketch}{uv^\omega}$ be $\abs{\sketch}\times\abs{uv}$ matrices with $\{0,1\}$ entries for each $uv^\omega\in P\cup N$. Then, the following holds: the tables $\sattable{\sketch}{uv^\omega}$ satisfy the Semantic, Consistency and Suffix properties if and only if there exists a substitution $s$ such that LTL formula $f_s(\sketch)$ is consistent with $\sample$.
\end{theorem}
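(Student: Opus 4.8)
The statement is a biconditional, so the plan is to prove the two directions separately, using in both the single bridging idea that a table filled with genuine LTL valuations automatically obeys the three properties, while conversely a table obeying them can be \emph{realised} by an honest formula. The easy direction is from a substitution to the properties; the substantive one is from the properties to a substitution.

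For the direction from a consistent substitution to the three properties, I would start from a complete substitution $s$ with $f_s(\sketch)$ consistent with $\sample$. For every node $i$ of the syntax DAG of $\sketch$ let $\sketch[i]$ denote the subsketch rooted at $i$, and simply \emph{define} $\sattable{\sketch}{uv^\omega}[\sketch[i], t] = V(f_s(\sketch[i]), uv^\omega[t,\infty))$ for every $uv^\omega \in P \cup N$ and $0 \le t < \abs{uv}$. The Semantic property then holds because the defining equations for $p$, $\neg$, $\lnext$, $\lor$ and $\luntil$ are nothing but the LTL semantics transcribed onto the finitely many distinct suffixes, the loop-back entries being justified by Observation~\ref{obs:repeating-suffix}; the Consistency property is exactly the statement that $f_s(\sketch)$ accepts the positive and rejects the negative words; and the Suffix property holds because LTL is future-time, so $V(\cdot,\cdot)$ depends only on the suffix and equal suffixes receive equal valuations.

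For the converse, I assume the three properties and construct $s$ placeholder by placeholder. Fix a Type-0 placeholder $\placeholder^0$. By the Suffix property, any two equal suffixes in $\suf{\sample}$ carry the same entry in the row $\sattable{\sketch}{uv^\omega}[\placeholder^0, \cdot]$, so this row induces a well-defined labelling of the finite set of distinct suffixes by $\{0,1\}$. I would then collect the distinct suffixes labelled $1$ into a set $P_{\placeholder^0}$ and those labelled $0$ into $N_{\placeholder^0}$; these are disjoint finite sets of ultimately periodic words, so Remark~\ref{rem:generic-formula} yields an LTL formula $\psi_{\placeholder^0}$ consistent with $(P_{\placeholder^0}, N_{\placeholder^0})$, i.e.\ matching the placeholder row on every suffix. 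Setting $s(\placeholder^0) = \psi_{\placeholder^0}$ for each placeholder (and $s(\lambda)=\lambda$ on genuine operators), a structural induction over the DAG of $\sketch$ shows $V(f_s(\sketch[i]), uv^\omega[t,\infty)) = \sattable{\sketch}{uv^\omega}[\sketch[i], t]$ for every node $i$: the base cases are the propositions (handled by the Semantic equation for $p$) and the placeholders (handled by the choice of $\psi_{\placeholder^0}$), and the inductive step is the Semantic property itself, whose equations coincide with the LTL semantics of the corresponding operator. Finally the Consistency property at the root gives $V(f_s(\sketch), uv^\omega) = 1$ for $uv^\omega \in P$ and $0$ for $uv^\omega \in N$, so $f_s(\sketch)$ is consistent with $\sample$.

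The hard part will be this converse direction, and precisely the realisation of each placeholder row by a genuine LTL formula. The crucial point---and the reason the Suffix property was singled out---is that without it the placeholder row could demand value $1$ on a suffix when read off one word and $0$ on the same suffix read off another, and then no single formula could fit; the Suffix property is exactly what excludes this and lets me invoke the generic construction of Remark~\ref{rem:generic-formula}. The remaining care is in the inductive step, where I must check that the Semantic equations for $\lnext$ and $\luntil$---with their loop-back into the periodic part $v$---really reproduce the LTL valuation on each suffix once the children's rows are known to be correct; this again rests on Observation~\ref{obs:repeating-suffix} but is routine rather than conceptually difficult.
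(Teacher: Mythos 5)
Your proposal is correct and follows essentially the same route as the paper: both directions use the same constructions, namely defining the tables from genuine valuations of $f_s(\sketch)$ for the easy direction, and for the converse building a sample from each placeholder row (disjoint by the Suffix property), realising it via the generic formula of Remark~\ref{rem:generic-formula}, and propagating row-correctness up the syntax DAG (your structural induction is just the contrapositive of the paper's maximal-counterexample argument) before invoking the Consistency property at the root. The only cosmetic difference is that you handle multiple Type-0 placeholders directly, whereas the paper treats a single placeholder and remarks that the extension is seamless.
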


\begin{proof}
For simplicity, we again consider that $\sketch$ consists of only one Type-0 placeholder $\placeholder^0$. 
The proof can be seemlessly extended to multiple Type-0 placeholders.

	For the forward direction, we show the existence of the substitution $s$ by explicit construction of an LTL formula for  $\placeholder^0$.
Towards this, we first construct a sample $\sample'=(P',N')$ as follows: 
\begin{align*}
&P'= \{uv^\omega[t,\infty)\in\suf{\sample}~|~\sattable{\sketch}{uv^\omega}[\placeholder^0,t]=1,\ uv^\omega\in P\cup N,\ 0\leq i< \abs{uv}\}\\
&N'= \{uv^\omega[t,\infty)\in\suf{\sample}~|~\sattable{\sketch}{uv^\omega}[\placeholder^0,t]=0,\ uv^\omega\in P\cup N,\ 0\leq i< \abs{uv}\}.
\end{align*} 
Since the tables satisfy the Suffix property, we have that $P'\cap N' = \emptyset$.
We can now construct the generic LTL formula $\psi$ consistent with $\sample'$ based on Remark~\ref{rem:generic-formula}.
We claim that this formula $\psi$ can be substituted in $\placeholder^0$ to obtain a consistent LTL formula. 

Towards this, we first prove that $\sattable{f_s(\sketch)}{uv^\omega}[\psi,\cdot]=\sattable{\sketch}{uv^\omega}[\placeholder,\cdot]$ for all $uv^\omega\in P\cup N$.
To prove this, we exploit two simple observations.
First, using the definition of tables, we have $\sattable{f_s(\sketch)}{uv^\omega}[\psi,t] = V(\psi,uv^\omega[t,\infty))$ for each $uv^\omega[t,\infty)\in\suf{\sample}$.
Second, since $\psi$ is consistent with $\sample'$, we know $V(\psi,uv^\omega[t,\infty))=\sattable{\sketch}{uv^\omega}[\placeholder,t]$.
Together, we have $\sattable{f_s(\sketch)}{uv^\omega}[\psi,t] = \sattable{\sketch}{uv^\omega}[\placeholder,t]$.

Next, we prove that $\sattable{f_s(\sketch)}{uv^\omega}[f_s(\sketch)[i],\cdot]=\sattable{\sketch}{uv^\omega}[\sketch[i],\cdot]$ for each $0\leq i< \abs{\sketch}$ and word $uv^\omega\in P\cup N$. (Note that we denote the same nodes in $f_s(\sketch)$ and $\sketch$ using the same identifiers.)
Towards contradiction, we assume that there exists some $uv^\omega \in P\cup N$ and some $0\leq i< \abs{\sketch}$ and  such that $\sattable{f_s(\sketch)}{uv^\omega}[f_s(\sketch)[i],\cdot]\neq\sattable{\sketch}{uv^\omega}[\sketch[i],\cdot]$. 
Let $i^*$ be the maximum row for which the tables become unequal.
The proof, in general, will proceed via a case analysis on the operator $\ell(i^*)$ labeled in Node~$i^*$.
However, since for proof is similar for all the operators, we assume $\ell(i)=\neg$ and Node~$j^*$ is the left child of Node~$i^*$.
Recall that $j^*>i^*$ based on our assignment of identifiers.
Further, based on the Semantic property, $\sattable{f_s(\sketch)}{uv^\omega}[f_s(\sketch)[i^*],t] = 1 - \sattable{f_s(\sketch)}{uv^\omega}[f_s(\sketch)[j^*],t]$ and
$\sattable{\sketch}{uv^\omega}[\sketch[i^*],\cdot]= 1 - \sattable{\sketch}{uv^\omega}[\sketch[j^*],\cdot]$ for each $0\leq t\leq \abs{uv}$ (Equation~\ref{eq:table-relation-neg}).
This implies that $\sattable{f_s(\sketch)}{uv^\omega}[f_s(\sketch)[j^*],\cdot]\neq\sattable{\sketch}{uv^\omega}[\sketch[j^*],\cdot]$, contradicting the maximality of $i^*$.

Finally, observe that $\sattable{f_s(\sketch)}{uv^\omega}[f_s(\sketch),\cdot]=\sattable{\sketch}{uv^\omega}[\sketch,\cdot]$.
As a consequence, since tables $\sattable{\sketch}{uv^\omega}$ satisfy the Consistency property, so does tables $\sattable{f_s(\sketch)}{uv^\omega}$.
This implies that $f_s(\sketch)$ is consistent with $\sample$.

For the other direction, we construct tables $\sattable{\sketch}{uv^\omega}[\psi,\cdot]$ based on the tables $\sattable{f_s(\sketch)}{uv^\omega}[\psi,\cdot]$.
In particular, we have $\sattable{f_s(\sketch)}{uv^\omega}[\psi,\cdot] = \sattable{\sketch}{uv^\omega}[\psi,\cdot]$ for each $0\leq i<\abs{\sketch}$ and $uv^\omega \in P\cup N$.
Since tables $\sattable{f_s(\sketch)}{uv^\omega}[\psi,\cdot]$ satisfy the Semantic, the Consistency and the Suffix properties, so does the tables $\sattable{\sketch}{uv^\omega}[\psi,\cdot]$. \qed

\end{proof}


With this, we conclude the \NP{} algorithm for the case where $\sketch$ only has one Type-0 placeholder.
We can easily extend the algorithm to the case where $\sketch$ additionally consists of Type-1 and Type-2 placeholders.
In particular, we first guess the operators to be substituted for the Type-1 and Type-2 placeholders and substitute them.
We then obtain a sketch consisting of only Type-0 placeholders. 
We can now apply our algorithm that relies on guessing tables, as described above.
In total, we obtain the following result:
\begin{theorem}\label{thm:NP-exist}
	The \existprob{} problem is in $\NP$.
\end{theorem}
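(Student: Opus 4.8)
The plan is to establish membership in \NP{} by exhibiting a polynomial-size certificate together with a polynomial-time verification procedure. The certificate consists of two parts: first, a guess of the concrete LTL operators to be substituted for every Type-1 and Type-2 placeholder in \sketch{}; and second, for the resulting sketch (which now contains only Type-0 placeholders), a guess of all entries of the tables $\sattable{\sketch}{uv^\omega}$ for every $uv^\omega \in P \cup N$. The first part of the guess has size $\mathcal{O}(\abs{\placeholderset^1} + \abs{\placeholderset^2})$, since each Type-1 placeholder is replaced by one of the finitely many operators in $\unaryOp$ and each Type-2 placeholder by one of those in $\binaryOp$. The second part has size $\mathcal{O}(\poly(\abs{\sketch}, \abs{\sample}))$, because each table is an $\abs{\sketch} \times \abs{uv}$ matrix of bits. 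Thus the entire certificate is polynomially bounded in the size of the input.

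The verification then proceeds as follows. First I would substitute the guessed operators for the Type-1 and Type-2 placeholders, yielding in polynomial time a sketch $\sketch'$ containing only Type-0 placeholders whose syntax DAG has at most $\abs{\sketch}$ nodes. Next I would check that the guessed tables $\sattable{\sketch'}{uv^\omega}$ satisfy the three properties identified earlier: the Semantic, Consistency, and Suffix properties. As already argued in the excerpt, checking the Semantic and Consistency properties amounts to verifying the defining Equations~\ref{eq:prop-table-semantics}--\ref{eq:table-consistency} entrywise, which takes time $\mathcal{O}(\poly(\abs{\sketch'}, \abs{uv}))$ for each word, hence polynomial time overall. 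For the Suffix property one must detect all pairs of equal suffixes in $\suf{\sample}$; by Lemma~\ref{lem:sketch-existence-formula}, two suffixes coincide if and only if they agree on a finite prefix of length $b = \poly(\abs{u_1}, \abs{u_2}, \abs{v_1}, \abs{v_2})$, so this comparison is computable in time $\poly(\abs{\sample})$.

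The correctness of the procedure rests entirely on the preceding theorem, which I would invoke as a black box: the guessed tables for $\sketch'$ satisfy the Semantic, Consistency, and Suffix properties \emph{if and only if} there exists a substitution $s$ for the Type-0 placeholders such that $f_s(\sketch')$ is consistent with \sample{}. Composing this with the guess of operators for the Type-1 and Type-2 placeholders, a consistent complete substitution for the original sketch \sketch{} exists precisely when there is some choice of operator substitution and some table guess passing all three checks. Since the nondeterministic guesses range over a polynomially-bounded certificate space and each check runs in deterministic polynomial time, the \existprob{} problem lies in \NP{}.

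I expect the main subtlety to be bookkeeping rather than any deep difficulty: one must be careful that substituting operators for Type-1 and Type-2 placeholders genuinely reduces to the single-Type-0 case covered by the theorem, and that the node identifiers of $\sketch'$ remain consistent with the table dimensions after substitution. The finiteness of $\unaryOp$ and $\binaryOp$ is what keeps the operator-guessing step polynomial, and the table-based approach is precisely what sidesteps the a~priori unknown size of the LTL formula needed to fill a Type-0 placeholder, since we never guess the formula itself but only its observable valuations on the finitely many relevant suffixes.
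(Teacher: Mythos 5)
Your proposal is correct and follows essentially the same route as the paper: guess the operators for Type-1 and Type-2 placeholders, guess the table entries $\sattable{\sketch}{uv^\omega}$ rather than the (unboundedly large) formulas for Type-0 placeholders, verify the Semantic, Consistency, and Suffix properties in polynomial time (using the suffix-equality lemma for the latter), and invoke the characterization theorem for correctness. The only difference is presentational — you package the operator guess and the table guess into a single certificate upfront, whereas the paper first treats the Type-0-only case and then extends it — but the argument is the same.
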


While we determine the complexity upper bound of the \existprob{} problem to be \NP{}, the lower bound is open and adds to the list of open problems in the area of LTL inference (see Section~6 of~\cite{abs-2102-00876}).

\subsection{The decision procedure}\label{subsec:exists-algo}
Based on the \NP{} algorithm described above, we now devise a decision procedure to decide the \existprob{} problem.
The decision procedure relies upon solving an instance of SAT problem to check if there exists suitable tables $\sattable{\varphi}{uv^\omega}$ which satisfy the three properties, discussed in Section~\ref{subsec:exists-prob-decidability}.

To encode entries of the tables, we first introduce the following variables: $y^{u,v}_{i,t}$ for each $i\in\{1, \cdots n\}$, $t\in\{0,\cdots,\abs{uv}-1\}$, and $uv^\omega\in P\cup N$.
A variable $y^{u,v}_{i,t}$ encodes the entry $\sattable{\varphi}{uv^\omega}[\varphi[i],t]$.
Further, to encode the operators to be substituted for the Type-1 and Type-2 placeholders in $\sketch$, we have the following variables:
\begin{enumerate*}[label=({\roman*})]
\item $x_{i,\lambda}$ for each Node~$i$ where $\ell(i)$ is a Type-1 placeholder and each $\lambda\in\Lambda_{U}$; and 
\item $x_{i,\lambda}$ for each Node~$i$ where $\ell(i)$ is a Type-2 placeholder and each $\lambda\in\Lambda_{B}$
\end{enumerate*}

We now impose constraints on the introduced variables to ensure that they encode tables that satisfy the properties for inferring a consistent LTL formula.
We achieve this by constructing a propositional formula $\existFormula$ using the introduced variables.
This formula ensures that the variables $y^{u,v}_{i,t}$ encode tables that satisfy the three required properties.

Since $\existFormula$ ensures the existence of a suitable tables, we have the following property of $\existFormula$: there exists a complete substitution $s$ for $\sketch$ such that $f_s(\sketch)$ is consistent with $\sample$ if and only if $\existFormula$ is satisfiable.
We can now simply check the satisfiability of $\existFormula$ using an off-the-shelf SAT solver to determine the existence of a complete substitution.

Internally,~$\existFormula=\Phi^{1,2}_{\placeholder}\wedge\Phi_{\mathit{sem}}\wedge \Phi_{\mathit{con}}\wedge\Phi_{\mathit{suf}}$ is a conjunction of four formulas.
The first conjunct $\Phi^{1,2}_{\placeholder}$ ensures that the Type-1 and Type-2 placeholders are substituted by appropriate operators.
The conjuncts $\Phi_{\mathit{sem}}$, $\Phi_{\mathit{con}}$ and $\Phi_{\mathit{suf}}$ ensure that the variables $y^{u,v}_{i,t}$ encode entries of tables that satisfy the Semantic  property (Equations~\ref{eq:table-relation-neg} to~\ref{eq:table-relation-until}), the Consistency property (Equation~\ref{eq:table-consistency}) and the Suffix property (Equation~\ref{eq:table-suffix-property}), respectively.
We now describe briefly how each of these formulas are constructed.

\paragraph{Constraints for $\Phi^{1,2}_{\placeholder}$.}
For each Node~$i$ labeled with a Type-1 placeholder (i.e., $\ell(i)\in\placeholderset^1$), we have the following constraint:
\begin{align}\label{eq:selective-label-uniqueness}
\Big[ \bigvee_{\lambda \in \Lambda_{U}} x_{i,\lambda} \Big] \land \Big[ \bigwedge_{\lambda \neq \lambda' \in \Lambda_{U}} \lnot x_{i, \lambda} \lor \lnot x_{i, \lambda'} \Big],
\end{align}
which ensures that the Type-1 placeholder is substituted by a unique unary operator.
For Type-2 placeholders, we have the exact same constraint except that the operators range from the set of binary operators $\Lambda_{B}$.
Now, we construct $\Phi^{1,2}_{\placeholder}$ as the conjunction of all such constraints for the nodes labeled with Type-1 and Type-2 placeholders.

\paragraph{Constraints for $\Phi_{\mathit{sem}}$.}
We define $\Phi_{\mathit{sem}} = \bigwedge_{uv^\omega\in P\cup N} \Phi^{u,v}$, where $\Phi^{u,v}$ is a propositional formula that ensures that the
variables $y^{u,v}_{i,t}$ satisfy the Equations~\ref{eq:table-relation-neg} to~\ref{eq:table-relation-until} for word $uv^\omega$ in $\sample$.
In $\Phi^{u,v}$, for instance, for each Node~$i$ labeled with $\lnext$-operator (i.e, $\ell(i)=\lnext$) and has Node~$j$ as its left child, we have the following constraint:
\begin{align}\label{eq:semantic}
\Big[\bigwedge_{0\leq t\leq \abs{uv}-1} \Big[ y^{u,v}_{i,t} \leftrightarrow
y^{u,v}_{j,t+1} \Big]\Big] \wedge \Big[y^{u,v}_{i,\abs{uv}-1} \leftrightarrow
y^{u,v}_{j,\abs{u}}\Big]
\end{align}
This constraints ensures that the variables $y^{u,v}_{i,t}$ satisfy Equation~\ref{eq:table-relation-next} for the word $uv^\omega$.
We construct similar constraints for the other operators based on their corresponding semantic relation. 

For nodes labeled with Type-1 and Type-2 placeholders, we additionally rely on variables $x_{i,\lambda}$ to determine the operator $\lambda$ to be substituted in Node~$i$.
For instance, for each Node~$i$ labeled with a Type-1 placeholder (i.e., $\ell(i)\in\placeholderset^1$) that has Node~$j$ as its left child, we have the following constraint:
\begin{align}\label{eq:placeholder-next-semantics}
x_{i,\lnext} \rightarrow \Big[ \bigwedge_{0\leq t\leq \abs{uv}-1} \Big[ y^{u,v}_{i,t} \leftrightarrow
y^{u,v}_{j,t+1} \Big]\Big] \wedge \Big[y^{u,v}_{i,\abs{uv}-1} \leftrightarrow
y^{u,v}_{j,\abs{u}}\Big]
\end{align}
The above constraint states that if Node~$i$ is substituted with a $\lnext$-operator (i.e., if $x_{i,\lnext}$ is true), the constraint on the $y^{u,v}_{i,t}$ variables is based on Equation~\ref{eq:table-relation-next}.
We construct similar constraints for when $x_{i,\lambda}$ is true for some other operator $\lambda$ based on the corresponding equation for $\lambda$.
Finally, we construct $\Phi^{u,v}$ as the conjunction of all such constraints that rely on the semantic equations.


\paragraph{Constraints for $\Phi_{\mathit{sem}}$.}
We construct $\Phi_{\mathit{con}}$ as follows:
\begin{align}
\Phi_{con} = \Big[\bigwedge_{uv^\omega\in P\cup N} \Phi^{u,v}\Big]\wedge \Big[\bigwedge_{uv^\omega\in P} y^{u,v}_{1,0} \wedge \bigwedge_{uv^\omega\in N} \neg y^{u,v}_{1,0}\Big]
\end{align}
This formula ensures that Equation~\ref{eq:table-consistency} is satisfied for the prospective bit-vectors, in addition to ensuring $\Phi^{u,v}$ holds for all $uv^\omega$ in $\sample$.

\paragraph{Constraints for $\Phi_{\mathit{suf}}$.}
In $\Phi_{\mathit{suf}}$, for each Node~$i$ labeled with a Type-0 placeholder (i.e., $\ell(i)\in\placeholderset^{0}$), we have the following constraint:
\begin{align}\label{eq:suffix}
\bigwedge_{u_1v_1^\omega[t,\infty)=u_2v_2^\omega[t',\infty)\in\suf{\sample}} \Big[ y^{u_1,v_1}_{i,t} \leftrightarrow y^{u_2,v_2}_{j,t'} \Big],
\end{align}
which ensures that the variables $y^{u,v}_{i,t}$ the entries of the tables satisfy Equation~\ref{eq:table-suffix-property}

The following lemma establishes the correctness of the decision procedure. 
\begin{lemma}\label{lem:sketch-existence-formula}
Let $\sketch$ be a sketch, $\sample$ a sample and $\existFormula$ the formula as defined above. Then, 
$\existFormula$ is satisfiable if and only if there exists a complete substitution $s$ such that $f_{s}(\sketch)$ is consistent with $\sample$.
\end{lemma}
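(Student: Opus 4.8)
The plan is to prove the biconditional by reducing to the characterization theorem established above for sketches with only Type-0 placeholders, using the operator-selection variables $x_{i,\lambda}$ to bridge the gap. The key observation is that a model of $\existFormula$ splits into two independent pieces of data: the truth values of the $x_{i,\lambda}$ variables, which (by $\Phi^{1,2}_{\placeholder}$) select exactly one operator for each Type-1 and Type-2 placeholder node, and the truth values of the $y^{u,v}_{i,t}$ variables, which populate candidate tables $\sattable{\sketch}{uv^\omega}$.

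For the forward direction, I would start from a satisfying assignment and read off the operator choices from the $x_{i,\lambda}$ variables, defining a partial substitution $s_1$ on the Type-1 and Type-2 placeholders. Applying $s_1$ to $\sketch$ yields a sketch $\sketch'$ containing only Type-0 placeholders, with the same node identifiers. I would then argue that the tables encoded by the $y^{u,v}_{i,t}$ satisfy the Semantic, Consistency, and Suffix properties for $\sketch'$: Consistency is immediate from $\Phi_{\mathit{con}}$; the Semantic property follows from $\Phi_{\mathit{sem}}$, where for a placeholder node the conditional constraint $x_{i,\lambda}\rightarrow(\cdots)$ collapses---once $x_{i,\lambda}$ is fixed true for the selected $\lambda$---to exactly the semantic equation of the operator now sitting at that node in $\sketch'$. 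For the Suffix property I would note a subtlety: $\Phi_{\mathit{suf}}$ constrains only the Type-0 placeholder nodes, but this suffices, because on every other node the value is determined by its children through the Semantic property, so an induction on the DAG (anchored at propositions, whose values are fixed by Equation~\ref{eq:prop-table-semantics}, and at the explicitly constrained Type-0 nodes) propagates equality of rows on equal suffixes to the whole table. With the three properties in hand, the characterization theorem yields a substitution $s_0$ for the Type-0 placeholders making $f_{s_0}(\sketch')$ consistent with $\sample$; combining $s_0$ and $s_1$ into a single complete substitution $s$ gives $f_s(\sketch)=f_{s_0}(\sketch')$, as required.

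For the converse, I would take a complete substitution $s$ with $f_s(\sketch)$ consistent with $\sample$ and build a model of $\existFormula$ explicitly. I would set $x_{i,\lambda}$ true exactly when $s$ substitutes operator $\lambda$ at placeholder node $i$ (satisfying $\Phi^{1,2}_{\placeholder}$), and set $y^{u,v}_{i,t}=V(f_s(\sketch)[i],uv^\omega[t,\infty))$, the genuine LTL valuations of the substituted formula at each node. Because these are actual valuations, the tables $\sattable{f_s(\sketch)}{uv^\omega}$ satisfy all three properties by construction, so $\Phi_{\mathit{con}}$ and $\Phi_{\mathit{suf}}$ hold immediately, and the concrete-operator constraints of $\Phi_{\mathit{sem}}$ hold as well. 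For the conditional placeholder constraints, the implication $x_{i,\lambda}\rightarrow(\cdots)$ is satisfied because its antecedent is true only for the single operator $\lambda$ actually chosen by $s$, and for that operator the actual valuation obeys the corresponding semantic equation; every other implication is vacuously true.

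The step I expect to be the main obstacle is the reconciliation between the conditional placeholder encoding in $\Phi_{\mathit{sem}}$ and the unconditional semantic equations assumed in the characterization theorem, i.e.\ verifying cleanly that fixing the $x_{i,\lambda}$ variables ``projects'' $\existFormula$ onto precisely the constraint system of the Type-0-only sketch $\sketch'$. Closely related, and requiring care, is the propagation argument showing that enforcing the Suffix property only at Type-0 placeholder nodes (as $\Phi_{\mathit{suf}}$ does) is enough to recover it at every node; this relies on the identifier assignment guaranteeing that children carry larger indices, so the induction is well-founded. The remaining bookkeeping---matching node identifiers across $\sketch$, $\sketch'$, and $f_s(\sketch)$, and assembling $s_0$ and $s_1$ into one substitution---is routine.
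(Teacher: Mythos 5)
Your proposal is correct and follows essentially the same route as the paper's proof: the forward direction reads the Type-1/Type-2 operators off the $x_{i,\lambda}$ variables, reconstructs the tables from the $y^{u,v}_{i,t}$ variables, and invokes the characterization theorem for Type-0-only sketches, while the converse builds a model directly from the substitution and the genuine valuation tables of $f_s(\sketch)$. If anything, you are more careful than the paper on the two subtleties you flag (the collapse of the conditional placeholder constraints and the scope of the Suffix property), which the paper's proof passes over silently.
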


\begin{proof}
	If $\existFormula$ is satisfiable, we can construct a suitable complete substitution $s$ based on the model $v$ of $\existFormula$.
	To begin with, for Node~$i$ labeled with a Type-1 or Type-2 placeholder $\placeholder$, we have $s(\placeholder) = \lambda$, where $\lambda$ is the unique operator for which $v(x_{i,\lambda})=1$.
	Next, for nodes labeled with Type-0 placeholders, we first construct tables $\sattable{\sketch}{uv^\omega}$ for each $uv^\omega\in P\cup N$ using the value of the variables $v(y^{u,v}_{i,t})$. 
	Based on the construction of $\existFormula$, all of these tables satisfy the Semantic, Consistency and Suffix properties. 
	Now, we use Lemma~\ref{lem:sketch-existence-formula} to find substitutions to Type-0 placeholders using the tables $\sattable{\sketch}{uv^\omega}$.
	
	For the other direction, we construct a satisfying assignment $v$ using the substitution function $s$ and tables $\sattable{f_s(\varphi)}{uv^\omega}$ for $uv^\omega\in P \cup N$.
	First, we assign $v(x_{i,\lambda}) = 1$ if and only if $s(\placeholder) = \lambda$ for a Node~$i$ labeled with a Type-1 or Type-2 placeholder $\placeholder$.
	Second, we assign $v(y^{u,v}_{i,t}) = \sattable{f_s(\varphi)}{uv^\omega}[f_s(\sketch)[i],t]$ for each $uv^\omega\in P\cup N$ and $0\leq t\leq |uv|$.
	This assignment $v$ satisfies $\Phi^{1,2}_{\placeholder}$, since we obtain $v$ from the syntax DAG of a valid LTL formula.
	Further, this assignment satisfies $\Phi_{\mathit{sem}}$, $\Phi_{\mathit{con}}$ and  $\Phi_{\mathit{suf}}$ because the tables satisfy Semantic, Consistency and Suffix properties, respectively, on which the constraints are based.
	Overall, $v$ is a model for $\existFormula$.\qed
\end{proof}

Finally, we have the following remark to assess the size of the encoding $\existFormula$.

\begin{remark}
Let $n = \abs{\sketch}$ and $m=\sum_{uv^\omega\in P\cup N} \abs{uv} $.
The formula $\existFormula$ ranges over $\mathcal{O}(n+nm)$ variables and is of size $\mathcal{O}(n + nm^3 + m^2)$.
\end{remark}


\section{Algorithms to complete an LTL sketch}
\label{sec:algorithms}
In this section, we describe two novel algorithms for solving the \sketchprob{} problem, which aim at searching for concise LTL formulas from sketches.
Our rationale behind finding concise specifications is motivated by two considerations: first, they are more human understandable and thus, easier for the engineers to interpret; second, potentially all verification and synthesis algorithms perform better with smaller specifications. 
To this end, our first algorithm relies on existing techniques for learning \emph{minimal} LTL formulas. 
Our second algorithm, alternatively, searches for formulas of increasing size using constraint solving.

\subsection{Algorithm based on LTL learning}\label{subsec:ltl-learn-algo}
Our first algorithm for finding concise LTL formulas builds upon the decision procedure for checking the existence of a complete substitution presented in Section~\ref{subsec:exists-algo}.
In particular, we exploit the formula $\existFormula$ to reduce the \sketchprob{} problem to a number of instances of the \learnprob{} problem, one for each Type-0 placeholder.
One can then exploit recent advancements~\cite{NeiderG18,Riener19} in solving the \learnprob{} problem to find concise ways of filling out the sketch. 

Algorithm~\ref{alg:ltl-learn} outlines the details of the algorithm. 
The first step is to construct $\existFormula$ from the given sample and sketch, as described in Section~\ref{subsec:exists-algo}.
If $\existFormula$ is unsatisfiable, the algorithm straight-away returns that no solution exists (Line~\ref{line:not-satisfiable}), as established by Theorem~\ref{thm:algo1-correctness}.
If it is satisfiable, we use a model (say $v$) of $\existFormula$, which can be obtained from any off-the-shelf SAT solver, to fill out $\sketch$.
We now describe in detail how to fill out a sketch.

Given a model $v$ of $\existFormula$, in a rather straightforward manner, one can substitute the Type-1 and Type-2 placeholders in $\sketch$ (Line~\ref{line:fill-type-1-2}).
For each Node~$i$ where $\ell(i)$ is a Type-1 and Type-2 placeholders, we assign $s(\ell(i))=\lambda$ where, $\lambda$ is the unique operator for which $v(x_{i,\lambda})=1$.

The Type-0 placeholders, however, are more challenging to substitute.
This is because, they represent entire LTL formulas.
Towards substituting Type-0 placeholders, for every Node~$i$ where $\ell(i)$ is a Type-0 placeholder (i.e., $\ell(i)\in\placeholderset^0$), we first construct a sample $\sample_i=(P_i,N_i)$ (Line~\ref{line:fill-type-0}) as follows: 
\begin{align*}
&P_i= \{uv^\omega[t,\infty)\in\suf{\sample}~|~v(y^{u,v}_{1,t})=1\}, \text{ and }\\
&N_i= \{uv^\omega[t,\infty)\in\suf{\sample}~|~v(y^{u,v}_{1,t})=0\}.
\end{align*}

Now, we learn a concise LTL formula $\varphi_i$ consistent with the sample $\sample_i$ (using e.g., algorithms by Neider and Gavran~\cite{NeiderG18}) for substituting $\ell(i)$ (Line~\ref{line:learn-ltl}).

\begin{algorithm}
	\caption{Algorithm based on LTL learning}\label{alg:ltl-learn}
	
	\begin{algorithmic}[1]
		\State \textbf{Input:} Sketch $\sketch$, Sample $\sample$
		\State Construct $\existFormula= \Phi^{1,2}_{\placeholder}\wedge \Phi_{\mathit{sem}}\wedge\Phi_{\mathit{con}}\wedge\Phi_{\mathit{suf}}$ 
		\If{$\existFormula$ is satisfiable (say with model $v$)}
		\State Substitute Type-1 and Type-2 placeholders in $\sketch$ using $v$\label{line:fill-type-1-2}
		\For{every $i$ such that $\ell(i)\in\placeholderset^0$}
		\State Construct $\sample_i=(P_i,N_i)$\label{line:fill-type-0}
		\State $\Phi_i\gets\mathit{Learn}(\sample_i)$\label{line:learn-ltl}
		\State Substitute Node~$i$ with $\Phi_i$ in $t$
		\EndFor
		\State \Return \sketch 
		\Else
		\State \Return LTL formula does not exist \label{line:not-satisfiable}
		\EndIf 		 
	\end{algorithmic}
\end{algorithm}

The correctness of the algorithm is established by the following theorem:
\begin{theorem}\label{thm:algo1-correctness}
Given sketch $\sketch$ and sample $\sample$,  Algorithm~\ref{alg:ltl-learn} terminates and completes $\sketch$ to output an LTL formula that is consistent with $\sample$, if such a formula exists.
\end{theorem}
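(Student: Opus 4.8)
The plan is to establish Theorem~\ref{thm:algo1-correctness} by proving three things in turn: termination, soundness (the output is consistent with $\sample$ whenever the algorithm returns a formula), and completeness (the algorithm reports non-existence only when no consistent completion exists). The backbone of the argument is the correctness of the decision procedure, namely Lemma~\ref{lem:sketch-existence-formula}, together with Remark~\ref{rem:generic-formula} guaranteeing that each \learnprob{} subproblem is solvable.

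First I would argue termination. The construction of $\existFormula$ is a finite procedure, as the formula ranges over finitely many variables and has finite size. Checking satisfiability of $\existFormula$ terminates because it is a finite propositional formula. If $\existFormula$ is unsatisfiable, the algorithm halts immediately at Line~\ref{line:not-satisfiable}. Otherwise, the loop runs once per Type-0 placeholder, of which there are finitely many (at most $\abs{\sketch}$). The only nontrivial point is that each call $\mathit{Learn}(\sample_i)$ terminates; here I would invoke that each $\sample_i = (P_i, N_i)$ is a genuine sample, i.e.\ $P_i \cap N_i = \emptyset$. This disjointness follows because the variables $v(y^{u,v}_{1,t})$ take a single Boolean value, so no suffix is placed in both $P_i$ and $N_i$; combined with the Suffix property enforced by $\Phi_{\mathit{suf}}$, equal suffixes receive equal values, so the partition is well-defined on $\suf{\sample}$. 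By Remark~\ref{rem:generic-formula}, a consistent formula for $\sample_i$ always exists, so any correct learning algorithm terminates with output $\Phi_i$.

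Next I would prove completeness, which is the direction where Lemma~\ref{lem:sketch-existence-formula} does the heavy lifting. The algorithm returns ``does not exist'' exactly when $\existFormula$ is unsatisfiable. By Lemma~\ref{lem:sketch-existence-formula}, $\existFormula$ is unsatisfiable if and only if there is no complete substitution $s$ with $f_s(\sketch)$ consistent with $\sample$. Hence the algorithm correctly reports non-existence precisely when no consistent completion exists, which is the ``if such a formula exists'' caveat in the statement.

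Finally I would prove soundness: when $\existFormula$ is satisfiable with model $v$, the returned formula is consistent with $\sample$. For the Type-1 and Type-2 placeholders, the constraint $\Phi^{1,2}_{\placeholder}$ (Equation~\ref{eq:selective-label-uniqueness}) guarantees a unique $\lambda$ with $v(x_{i,\lambda})=1$, so the substitution at Line~\ref{line:fill-type-1-2} is well-defined. For each Type-0 placeholder, I would show that substituting the learned $\Phi_i$ reproduces exactly the row of the guessed table, i.e.\ $\sattable{f_s(\sketch)}{uv^\omega}[\Phi_i,t] = v(y^{u,v}_{1,t})$ for every relevant suffix, since $\Phi_i$ is consistent with $\sample_i$ by construction. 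Propagating this equality up the syntax DAG---exactly as in the proof of the preceding theorem, by induction from the leaves toward the root using the Semantic property encoded in $\Phi_{\mathit{sem}}$---yields that the root row of $f_s(\sketch)$ matches the root row of the guessed tables. Since $v$ satisfies $\Phi_{\mathit{con}}$, the root entries at position $0$ are $1$ for positive words and $0$ for negative words, so $f_s(\sketch)$ is consistent with $\sample$. The main obstacle is this soundness step: one must carefully verify that the table row computed for the \emph{learned} formula $\Phi_i$ coincides with the guessed values $v(y^{u,v}_{1,t})$ on \emph{all} suffixes in $\suf{\sample}$, not just the full words, and that the Suffix property is what makes $\sample_i$ a consistent sample so that learning cannot fail---this is the precise place where the three table properties interlock.
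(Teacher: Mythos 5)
Your proposal is correct and matches the paper's intended argument: the paper states this theorem without an explicit proof, leaving it as the direct assembly of Lemma~\ref{lem:sketch-existence-formula} (for the non-existence report), the Suffix property (which makes each $\sample_i$ a genuine sample with $P_i \cap N_i = \emptyset$), Remark~\ref{rem:generic-formula} (so each $\mathit{Learn}(\sample_i)$ call terminates), and the leaves-to-root propagation of table rows via the Semantic and Consistency properties already carried out in the theorem of Section~\ref{subsec:exists-prob-decidability}. Your decomposition into termination, completeness, and soundness fills in exactly these details, so nothing is missing.
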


Observe that Algorithm~\ref{alg:ltl-learn} constructs new samples for each Type-0 placeholder.
Each of these samples have size $\mathcal{O}(\size{\suf{\sample}})=\mathcal{O}(\size{\sample}^2)$,
which poses a challenge to the scalability of this algorithm.
Furthermore, the new samples are not optimized to produce the minimal possible substitutions.
Our next algorithm improves both on the runtime and the size of the inferred specification.

\subsection{Algorithm based on incremental SAT solving}\label{subsec:incremental-SAT-solving}
We, now, introduce an algorithm that is based on a series of SAT solving problems, similar to the SAT-based algorithm by Neider and Gavran~\cite{NeiderG18}.
Given a sample $\sample$ and a natural number $n\in\nat\setminus\{0\}$, we construct a propositional formula $\incrementalFormula{n}$ of size $\poly(n,\abs{\sample})$ that has the following two properties:
\begin{itemize}
	\item $\incrementalFormula{n}$ is satisfiable if and only if one can complete $\sketch$ to obtain an LTL formula of size $n$ that is consistent with $\sample$; and
	\item if $v$ is a model of $\incrementalFormula{n}$, then $v$ contains sufficient information to complete $\sketch$ to construct an LTL formula $\varphi$ of size $n$ that is consistent with $\sample$.
\end{itemize}

\begin{algorithm}
	\caption{Algorithm based on incremental SAT solving}\label{alg:syntax-tree}	
	\begin{algorithmic}[1]
		\State \textbf{Input:}  Sketch $\sketch$, Sample $\sample$
		\State Construct $\existFormula= \Phi^{1,2}_{\placeholder}\wedge\Phi_{\mathit{sem}}\wedge\Phi_{\mathit{con}}\wedge\Phi_{\mathit{suf}}$
		\If{$\existFormula$ is satisfiable}
		\State $n\gets\size{\sketch}-1$
		\Repeat
		\State $n\gets n+1$
		\State Construct $\incrementalFormula{n}= \Phi^{1,2}_{\placeholder}\wedge\Phi^{'}_{\mathit{sem}}\wedge\Phi_{\mathit{con}}\wedge\Phi^{0}_{\mathit{\placeholder},n}$
		\Until{$\incrementalFormula{n}$ is satisfiable (say with model $v$)}
		\State Substitute placeholders in $t$ using $v$
		\State \Return $\sketch$
		\Else
			\State \Return LTL formula does not exist
		
		\EndIf
		
	\end{algorithmic}
	\end{algorithm}

However, in contrast to the algorithms by Neider and Gavran, we first solve $\existFormula$ (discussed in Section~\ref{subsec:exists-prob-decidability}) to determine the existence of a complete substitution.
If and only if $\existFormula$ is satisfiable, our algorithm checks the satisfiability of $\incrementalFormula{n}$ for increasing values of $n$ (starting from 1) to search for an LTL formula of size at most $n$ that has the same syntactic structure as $\sketch$.
We construct the resulting LTL formula by substituting the placeholders in $\sketch$ based on a model $v$ of the satisfiable formula $\incrementalFormula{n}$.
This idea is illustrated in Algorithm~\ref{alg:syntax-tree}.


On a technical level, the formula $\incrementalFormula{n}$ is obtained by modifying certain parts of the formula $\existFormula$. 
Precisely, $\incrementalFormula{n} =  \Phi^{1,2}_{\placeholder}\wedge\Phi'_{\mathit{sem}}\wedge\Phi_{\mathit{con}}\wedge\Phi^{0}_{\mathit{\placeholder},n}$ employs a two modifications: a new formula $\Phi^{0}_{\mathit{\placeholder},n}$ replaces $\Phi_{\mathit{suf}}$; and $\Phi'_{\mathit{sem}}$ is obtained by adding more constraints to $\Phi_{\mathit{sem}}$.
The formula $\Phi^{0}_{\mathit{\placeholder},n}$ encodes the structure LTL formulas that substitute the Type-0 placeholders.
$\Phi'_{\mathit{sem}}$, again as in $\Phi_{\mathit{sem}}$, ensures that the variables $y^{u,v}_{i,t}$ encode table entries $\sattable{\varphi}{uv^\omega}[\varphi[i],]$ that satisfy Equations~\ref{eq:table-relation-neg} to~\ref{eq:table-consistency}.
We now briefly describe the constraints for the newly introduced formulas.

\paragraph{Constraints for $\Phi^{0}_{\mathit{\placeholder},n}$.}
The constraints here rely on an additional set of variables:
\begin{enumerate*}[label=({\roman*})]
\item $x_{i,\lambda}$ for each Node $i$ labeled with a Type-0 placeholder, and for each $i\in\{\abs{\sketch}+1,\cdots, n\}$ such that $\ell(i)$ is a Type-0 placeholder, and each $\lambda\in\Lambda$; and
\item $l_{i,j}$ and $r_{i,j}$ for each Node~$i$ labeled with a Type-0 placeholder, and each $i\in\{\abs{\sketch}+1,\cdots,n\}$ and $j\in\{\max(i,\abs{t}),\cdots,n\}$.
\end{enumerate*}
The variable $x_{i,\lambda}$, again, encode that Node~$i$ is labeled with $\lambda$.
The variables $l_{i,j}$ (and $r_{i,j}$) encode that the left child (and the right child) of Node~$i$ is Node~$j$.
Together the new variables encode the structure of the prospective LTL formulas for Type-0 placeholders.

We now impose constraints, similar to Constraint~\ref{eq:selective-label-uniqueness}, on the variables $x_{i,\lambda}$ to ensure each node is labeled by a unique LTL operator from $\Lambda$.
Additionally, we ensure that each Node~$i$ has a unique left child using the following constraint:
\begin{align}\label{eq:child-uniqueness}
\Big[ \bigvee_{\max(i,\abs{t})\leq j\leq n} l_{i,j} \Big] \land \Big[ \bigwedge_{\max(i,\abs{t})\leq j\neq j'\leq n } \lnot l_{i,j} \lor \lnot l_{i,j'} \Big],
\end{align}
We have a similar constraint to ensure the uniqueness of right child of a node.
Now, we construct $\Psi_{\mathit{str}}$ as the conjunction of all such structural constraints.

\paragraph{Constraints for $\Phi'_{\mathit{sem}}$.}
We rely on new variables $y^{u,v}_{i,t}$ for each Node~$i$ labeled with a Type-0 variables and each $i\in\{\size{\sketch}+1,n\}$, each $t\in\{0,\cdots,\abs{uv}-1\}$ and each $uv^\omega$ in $\sample$.
On these variables, we impose the following constraint, which is similar to Constraint~\ref{eq:placeholder-next-semantics}:
\begin{align}
\Big[x_{i,\lnext} \wedge l_{i,j}\Big] \rightarrow\bigwedge_{0\leq t\leq \abs{uv}-1} \Big[ y^{u,v}_{i,t} \leftrightarrow
y^{u,v}_{j,t+1} \Big] \wedge \Big[y^{u,v}_{i,\abs{uv}-1} \leftrightarrow
y^{u,v}_{j,\abs{u}}\Big],
\end{align}
that ensures that the $y^{u,v}_{i,t}$ variables encode entries of table that satisfy Equation~\ref{eq:table-relation-next}.
We construct $\Phi'_{\mathit{sem}}$ as the conjunction of $\Phi_{\mathit{sem}}$ and the new semantic constraints.

We, now, establish the correctness of Algorithm~\ref{alg:syntax-tree} using the following theorem:
\begin{theorem}\label{thm:algo1-correctness}
	Given sketch $\sketch$ and sample $\sample$,  Algorithm~\ref{alg:syntax-tree} terminates and completes $\sketch$ to output an LTL formula that is consistent with $\sample$, if such a formula exists.
\end{theorem}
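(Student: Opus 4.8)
The plan is to decompose the claim into the three standard ingredients of algorithmic correctness: \emph{soundness} (every formula the algorithm returns is consistent with $\sample$), \emph{completeness} (whenever a consistent completion exists, a formula is returned), and \emph{termination}. All three reduce to two facts that I may invoke: Lemma~\ref{lem:sketch-existence-formula}, which states that $\existFormula$ is satisfiable exactly when a consistent complete substitution for $\sketch$ exists, and the two stated characteristic properties of the incremental encoding $\incrementalFormula{n}$. Since the latter are the technical heart, I would first establish them as a separate lemma and then assemble the theorem.

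To prove the characterization of $\incrementalFormula{n}$, I would show that a model $v$ can be read off as a completed LTL formula of size at most $n$, and conversely. The conjunct $\Phi^{1,2}_{\placeholder}$ pins down, via Constraint~\ref{eq:selective-label-uniqueness}, a unique operator for every Type-1 and Type-2 placeholder. The conjunct $\Phi^{0}_{\placeholder,n}$, through the variables $x_{i,\lambda}$, $l_{i,j}$, $r_{i,j}$ and the child-uniqueness constraint (Constraint~\ref{eq:child-uniqueness}), describes a well-formed syntax DAG on the nodes $\{\size{\sketch}+1,\dots,n\}$ that fills the Type-0 placeholders, so that reading off $v$ yields a genuine completion $\varphi$ with $\size{\varphi}\le n$. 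The crucial step is to show, by induction on the node identifiers---proceeding from the largest identifiers (the leaves and the freely chosen subformulas) toward $1$, using that by construction each node has a strictly smaller identifier than its children---that $\Phi'_{\mathit{sem}}$ forces $v(y^{u,v}_{i,t})=\sattable{\varphi}{uv^\omega}[\varphi[i],t]=V(\varphi[i],uv^\omega[t,\infty))$ for every node $i$, position $t$, and word $uv^\omega\in P\cup N$; this is precisely the Semantic property, now enforced for the freely chosen lower part of the DAG as well as the fixed upper part inherited from $\sketch$. Once this identity holds, $\Phi_{\mathit{con}}$ (Equation~\ref{eq:table-consistency}) guarantees $V(\varphi,uv^\omega)=v(y^{u,v}_{1,0})$, forced to $1$ on $P$ and $0$ on $N$, i.e.\ consistency. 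The converse direction sets the $x$, $l$, $r$ variables from the syntax DAG of a given consistent completion of size at most $n$ and the $y$ variables from its true valuation tables, then verifies each conjunct; this mirrors the learner of Neider and Gavran~\cite{NeiderG18}. I expect this inductive semantic step---reconciling the structural variables of the lower DAG with the placeholder-driven upper part---to be the main obstacle, since it is exactly where $\Phi^{0}_{\placeholder,n}$ and $\Phi'_{\mathit{sem}}$ must be shown to interact correctly.

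With the characterization of $\incrementalFormula{n}$ in hand, the theorem follows. For soundness, the algorithm returns a formula only after substituting placeholders from a model of a satisfiable $\incrementalFormula{n}$; by the second property that formula is consistent with $\sample$. If no consistent completion exists, then $\existFormula$ is unsatisfiable by Lemma~\ref{lem:sketch-existence-formula}, so the algorithm immediately returns ``LTL formula does not exist'' and halts. For completeness and termination in the existence case, Lemma~\ref{lem:sketch-existence-formula} yields a complete substitution $s$ with $f_s(\sketch)$ consistent; set $n^\ast=\max(\size{\sketch},\size{f_s(\sketch)})$. By the first property, $\incrementalFormula{n}$ (which captures completions of size at most $n$) is satisfiable for $n=n^\ast$, so the loop---testing the decidable SAT instances $\incrementalFormula{n}$ for $n=\size{\sketch},\size{\sketch}+1,\dots$---must meet a satisfiable instance no later than $n=n^\ast$. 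Hence it terminates and, by soundness, returns a consistent completion, which establishes all three ingredients and completes the proof.
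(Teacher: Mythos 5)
The paper states this theorem without any proof, so your plan---prove the two advertised properties of $\incrementalFormula{n}$ as a lemma, then assemble soundness, completeness, and termination from them together with Lemma~\ref{lem:sketch-existence-formula}---has the right shape, and your model-to-formula direction (backward induction on node identifiers to force the Semantic property, then invoking $\Phi_{\mathit{con}}$) is sound. The genuine gap is in the converse direction and in your termination bound. You read the first property as ``$\incrementalFormula{n}$ is satisfiable iff a consistent completion of size at most $n$ exists,'' where size means the canonical syntax-DAG size of $f_s(\sketch)$, and accordingly set $n^\ast=\max(\size{\sketch},\size{f_s(\sketch)})$. But in the encoding, the formula substituted for a Type-0 placeholder at Node~$i$ can only be assembled from Node~$i$ itself and the fresh nodes $\{\size{\sketch}+1,\dots,n\}$: there are no variables $l_{i,j},r_{i,j}$ with $j$ ranging over the sketch's interior nodes, so the substituted part cannot share subformulas with the sketch. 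A completion whose small canonical size depends on exactly such sharing is therefore not encodable within $n$ nodes, and ``$\incrementalFormula{n^\ast}$ is satisfiable'' can fail. This is precisely the phenomenon of the paper's Example~\ref{eg:smaller-minimal-formula} and Figure~\ref{fig:not-minimal-formulas} (``minimal formulas require sharing of nodes''): for $\sketch=\placeholder^0\lor\lnext\lnext p$ with $\size{\sketch}=5$, the completion $\lnext p\lor\lnext\lnext p$ has canonical size $4\le n^\ast$, yet realizing $\placeholder^0\mapsto\lnext p$ (or, worse, a substitution like $\lglobally\lnext p$ that must reuse the sketch's interior node $\lnext p$) costs fresh nodes beyond the canonical count. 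Your converse construction---``set the $x,l,r$ variables from the syntax DAG of a given consistent completion''---breaks at exactly this point, because the child edges pointing into the sketch correspond to no variables of the encoding.

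The gap is repairable without changing your architecture, because the theorem claims only termination and consistency, not minimality. State and prove the characterization with respect to the encoding's own size measure (sketch nodes plus fresh nodes, cross-boundary sharing disallowed): given the substitution $s$ supplied by Lemma~\ref{lem:sketch-existence-formula}, lay out each $s(\placeholder^0_i)$ on its own fresh nodes to build an explicit model of $\incrementalFormula{n}$ for $n^\ast=\size{\sketch}+\sum_i\size{s(\placeholder^0_i)}$; since Algorithm~\ref{alg:syntax-tree} tests $n=\size{\sketch},\size{\sketch}+1,\dots$ in order, it meets a satisfiable instance no later than this (larger) $n^\ast$, and soundness of whatever it returns is your unchanged model-to-formula direction. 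Two smaller points you should make explicit in that direction: the read-off structure is a well-formed DAG only if children have strictly larger identifiers, whereas the paper's index set $j\in\{\max(i,\size{\sketch}),\dots,n\}$ literally admits $l_{i,i}$-style self-loops, so your induction must assume (or impose) strictness; and your reading of the loop as starting at $n=\size{\sketch}$ is correct---the paper's prose ``starting from 1'' contradicts its own pseudocode, so it is worth saying which one you follow.
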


While Algorithm~\ref{alg:syntax-tree} optimizes for the size of the inferred specification, it may not always return an minimal LTL formula. 
Example~\ref{eg:smaller-minimal-formula} demonstrates one such situation where Algorithm~\ref{alg:syntax-tree} could produce a sub-optimal result.
\begin{example}\label{eg:smaller-minimal-formula}
	Consider the sketch $\sketch=\placeholder_0 \lor \lnext\lnext p$ (Figure~\ref{fig:sketch-eg-cycle}) and sample $\sample$ consisting of positive examples $\{\}\{p\}\{\}^\omega$ and $\{\}\{\}\{p\}\{\}^\omega$ and a negative example $\{\}^\omega$.
	For this input, a possible output by Algorithm~\ref{alg:syntax-tree} is the formula 
	$\leventually p\lor \lnext\lnext p$ (Figure~\ref{fig:possible-formula}). 
	The minimal consistent formula $\lnext p\lor \lnext\lnext p$ (Figure~\ref{fig:minimal-formula}), however, is smaller.
	\begin{figure}[h]
		\centering
		\subfloat[Sketch\label{fig:sketch-eg-cycle}]{
			\begin{minipage}{3cm}
				\centering
				\begin{tikzpicture}
				\node (1) at (0, 0) {$\lor$};
				\node (2) at (-.5, -.7) {$\placeholder^0$};
				\node (3) at (0.5, -.7) {$\lnext$};
				\node (4) at (0.5, -1.4) {$\lnext$};
				\node (5) at (0.5, -2.1) {$p$};
				\draw[->] (1) -- (2); 
				\draw[->] (1) -- (3);
				\draw[->](3) -- (4);
				\draw[->] (4) -- (5);
				\end{tikzpicture}
			\end{minipage}
		}
		\subfloat[Possible formula\label{fig:possible-formula}]{
			\begin{minipage}{3cm}
				\centering
				\begin{tikzpicture}
				\node (1) at (0, 0) {$\lor$};
				\node (2) at (-.5, -.7) {$\leventually$};
				\node (3) at (0.5, -.7) {$\lnext$};
				\node (4) at (0.5, -1.4) {$\lnext$};
				\node (5) at (0.5, -2.1) {$p$};
				\node (6) at (-0.5,-1.4) {$p$};
				\draw[->] (1) -- (2); 
				\draw[->] (1) -- (3);
				\draw[->] (3) -- (4);
				\draw[->] (4) -- (5);
				\draw[->] (2) -- (6);
				\end{tikzpicture}
			\end{minipage}
		}
		\subfloat[Minimal formula\label{fig:minimal-formula}]{
			\begin{minipage}{3cm}
				\centering
				\begin{tikzpicture}
				\node (1) at (0, 0) {$\lor$};
				\node (2) at (0, -.7) {$\lnext$};
				\node (3) at (0, -1.4) {$\lnext$};
				\node (4) at (0, -2.1) {$p$};
				\draw[->] (1) -- (2); 
				\draw[->] (1) edge [bend right] (3);
				\draw[->] (2) -- (3);
				\draw[->] (3) -- (4);
				\end{tikzpicture}
			\end{minipage}	
		}
		\caption{Minimal formulas require sharing of nodes}
		\label{fig:not-minimal-formulas}
	\end{figure}
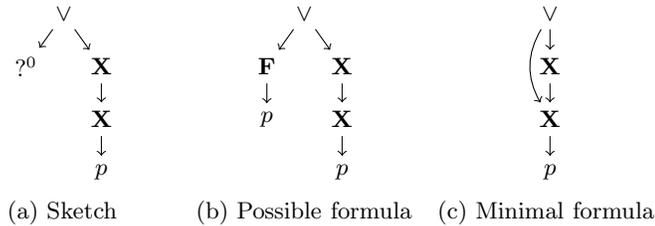
\end{example}
We leave the challenging problem of devising algorithms for searching for minimal LTL formulas from a given sketch as a part of future work.

\texttt{}\section{Experimental evaluation}
\label{sec:experiments}

In this section, we design experiments to answer the following research question: how do the presented algorithms perform in terms of their running times and the size of inferred specification?
Can the algorithms recover the specification intended by the engineer?

To answer these questions, we perform a comparative study of our algorithms (presented in Section~\ref{sec:algorithms}). 
Note that, to the best of our knowledge, none of the existing tools can solve the~\existprob{} problem the newly generated sample from which they learn is not ensured to produce the minimal formula.in its full generality.
Thus, we restrict ourselves to comparisons of the presented algorithms.

\pgfplotscreateplotcyclelist{mylist}{
	{red!90!black, mark=none},
	{green!50!black, mark=none},
	{blue!90!black, mark=none}
}
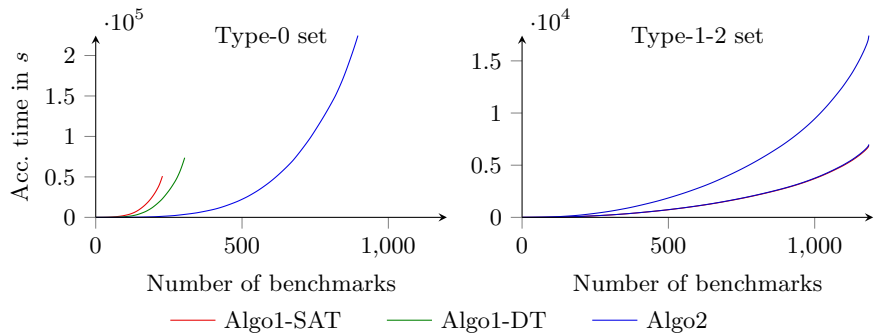
\begin{figure}
	\centering
	
	\begin{tikzpicture}
	\begin{groupplot}[
	width=6.25cm,
	height=4cm,
	group style={
		group size=2 by 1,
		ylabels at=edge left,
		horizontal sep=1cm
	},
	xlabel={Number of benchmarks},
	ylabel={Acc.\ time in $s$},
	axis lines=left,
	tick align=outside,
	title style={font=\footnotesize, at={(axis description cs:0.5, .8), anchor=north, inner sep=0pt}},
	cycle list name=mylist,
	axis on top,
	xmin=0,
	xmax=1200,
	]
	
	\nextgroupplot[title={Type-0 set}]
	\addplot table [x expr=\coordindex+1, y={Algo1SAT-type0}, col sep=comma] {csv/time-accumulated.csv}; \label{acc-time-plot:plot1}
	\addplot table [x expr=\coordindex+1, y={Algo1DT-type0}, col sep=comma] {csv/time-accumulated.csv}; \label{acc-time-plot:plot2}
	\addplot table [x expr=\coordindex+1, y={Algo2-type0}, col sep=comma] {csv/time-accumulated.csv};	 \label{acc-time-plot:plot3}
	
	
	\nextgroupplot[title={Type-1-2 set}]
	\addplot[red!80!black, mark=none] table [x expr=\coordindex+1, y={Algo1SAT-type12}, col sep=comma] {csv/time-accumulated.csv};
	\addplot[blue!80!black, mark=none] table [x expr=\coordindex+1, y={Algo1DT-type12}, col sep=comma] {csv/time-accumulated.csv};			
	\addplot[blue!80!black, mark=none] table [x expr=\coordindex+1, y={Algo2-type12}, col sep=comma] {csv/time-accumulated.csv};	
	
	\end{groupplot}
	\end{tikzpicture}
	\ref{acc-time-plot:plot1} Algo1-SAT \hskip 1.5em \ref{acc-time-plot:plot2} Algo1-DT \hskip 1.5em \ref{acc-time-plot:plot3} Algo2
	
	\caption{Accumulated runtime  of the algorithms on the two benchmark sets}
	\label{fig:acc-runtimes}
	\vspace{-0.5cm}
\end{figure}

We have implemented a prototype of our algorithms in a tool called \break \tool{} which is publicly available\footnote{\url{https://github.com/rajarshi008/LTL-sketcher}}.
While Algorithm~\ref{alg:ltl-learn} can exploit any LTL learning algorithm, we have chosen the algorithms presented by Neider and Gavran~\cite{NeiderG18} in their state-of-the-art LTL learning tool---\texttt{Flie}---one based on SAT solving and one on decision tree learning.
We refer to Algorithm 1 relying on the SAT-based algorithm as \algonesat{} and the one relying on decision tree learning as \algonedt{}.
We refer to Algorithm 2 as \algtwo{}.

For generating our benchmarks, we used the synthetic benchmark set presented by Neider and Gavran.
Their benchmark set consists of 1196 samples generated from 12 different LTL properties, which originate from a study by Dwyer et al.~\cite{DwyerAC98}.
Based on these samples, we generated two benchmark sets, which we refer to as the \zeroset{} set and the \onetwoset{} set.
Each benchmark in these two sets is simply a pair ($\sketch$, $\sample$) consisting of a sample and a sketch, which forms the input to our presented algorithms.

For generating a benchmark $(\sketch, \sample)$ in \zeroset{} set, we choose a sample $\sample$ generated from LTL formula $\varphi$ from the 1196 chosen samples; then construct a sketch $\sketch$ from $\varphi$ by substituting an aritrarily chosen subformula of size at least $\floor{\abs{\varphi}/2}$ by a Type-0 placeholder.
In this manner, we obtain 1196 pairs $(\sketch, \sample)$ for the \zeroset{} set.
We repeat the same process for the \onetwoset{} set, except that we construct a sketch $\sketch$ from $\varphi$ by substituting one aritrarily chosen operator by a Type-1 placeholder, if unary, otherwise with a Type-2 placeholder.
Here again we generate 1196 pairs $(\sketch, \sample)$ for the \onetwoset{} set.

All the experiments are conducted on a single core of a Debian machine with
Intel Xeon E7-8857 CPU (at 3~GHz) using up to 6~GB of RAM. The timeout
was set to be 1200~s for the run of each algorithm on each benchmark.

\begin{figure}
	\centering
	\begin{tikzpicture}	
	\begin{groupplot}[
	width=4cm,
	height=4cm,
	group style={
		group size=3 by 1,
		ylabels at=edge left,
		yticklabels at=edge left,
		horizontal sep=1.25cm
	},
	tick align=inside,
	title style={font=\footnotesize, at={(axis description cs:0.5, .8), anchor=north, inner sep=0pt}},
	cycle list name=mylist,
	axis on top,
	xmin=0,
	xmax=20,
	ymin=0,
	ymax=20,
	]
	
	\nextgroupplot[xlabel={Algo1-SAT}, ylabel={Algo1-DT}]
	\draw[gray] (rel axis cs:0, 0) -- (rel axis cs: 1, 1);
	\addplot[scatter=true,
	only marks,
	mark options={fill=gray},
	visualization depends on = {2*\thisrow{Freq-log1} \as \perpointmarksize},
	scatter/@pre marker code/.style={/tikz/mark size=\perpointmarksize},
	scatter/@post marker code/.style={}] table [col sep=comma,x={Algo1SAT1},y={Algo1DT1},meta index=2] {csv/sizes.csv};
	every mark/.append style={solid, fill=gray};	\label{size:plot1}

	\nextgroupplot[xlabel={Algo1-SAT}, ylabel={Algo2}, ylabel shift=-1mm]
	\draw[gray] (rel axis cs:0, 0) -- (rel axis cs: 1, 1);
	\addplot[scatter=true,
	only marks,
	mark options={fill=gray},
	visualization depends on = {2*\thisrow{Freq-log2} \as \perpointmarksize},
	scatter/@pre marker code/.style={/tikz/mark size=\perpointmarksize},
	scatter/@post marker code/.style={}] table [col sep=comma,x={Algo1SAT2},y={Algo22},meta index=2] {csv/sizes.csv};
	every mark/.append style={solid, fill=gray}; \label{size:plot2}
	
	\nextgroupplot[xlabel={Algo1-DT}, ylabel={Algo2}, ylabel shift=-1mm]
	\draw[gray] (rel axis cs:0, 0) -- (rel axis cs: 1, 1);			
	\addplot[scatter=true,
	only marks,
	mark options={fill=gray},
	visualization depends on = {2*\thisrow{Freq-log3} \as \perpointmarksize},
	scatter/@pre marker code/.style={/tikz/mark size=\perpointmarksize},
	scatter/@post marker code/.style={}] table [col sep=comma,x={Algo1DT3},y={Algo23},meta index=2] {csv/sizes.csv};
	every mark/.append style={solid, fill=gray}; \label{size:plot3}

	\end{groupplot}
	\end{tikzpicture}
	
	\caption{Comparison of sizes of resulting LTL formula on \zeroset{} set.}
	\label{fig:size-comparison}
	\vspace{-0.5cm}
\end{figure}
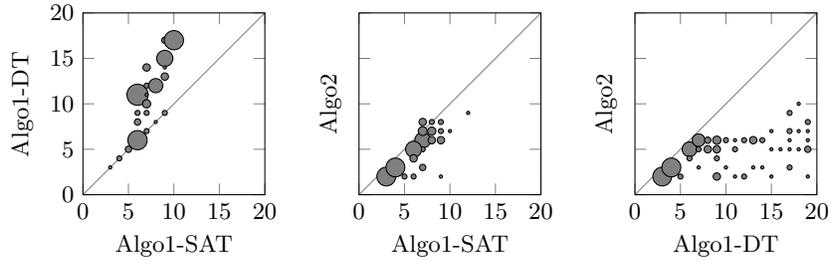


\subsubsection{Comparison of runtime}
First, we compare the three algorithms based on their runtimes on the \zeroset{} and the \onetwoset{} benchmark sets.
Figure~\ref{fig:acc-runtimes} presents the accumulated runtimes of the algorithms on the two benchmark sets.
From the figure, we observe that the runtimes of \algonesat{} and \algonedt{} is higher than that of \algtwo{} on the \zeroset{} set.
In fact, \algonesat{} and \algonedt{} experiences 957 and 881 time-outs, respectively, while, \algtwo{} only 289 time-outs. 
The superior performance of \algtwo{} can be attributed to the fact that \algonesat{} and \algonedt{} perform LTL learning on a newly generated sample $\sample'$ (as discussed in Section~\ref{subsec:ltl-learn-algo}) of size $\mathcal{O}({\abs{\sample}^2})$, which turns out to be an expensive procedure in most cases.

On the \onetwoset{} set, however, \algonesat{} and \algonedt{} display better runtime performance.
This is because for substituting Type-1 or Type-2 placeholders only, \algonesat{} and \algonedt{} do not generate new samples.
Rather, they only solve $\existFormula$ using a SAT-solver. 
\algtwo{}, on the other hand, additionally performs at least one iteration of its incremental SAT solving procedure.
This observation presents a possible future optimization of \algtwo{} to adapt based on the type of placeholder present in the sketch.

\subsubsection{Comparison of formula sizes}
We now compare the three algorithms based on the size of the resulting LTL formulas on the \zeroset{} set.
Figure~\ref{fig:size-comparison} presents a pair-wise comparison of the three algorithms in terms of size.
We notice that \algtwo{} produces the smallest formulas among all the algorithms.
This is because, \algtwo{} iteratively searches for consistent formulas of increasing size.
While \algonesat{} and \algonedt{} also try to optimize the size, the newly generated sample from which they learn is not ensured to produce the minimal formula.

\subsubsection{Comparison of formula recovery}
Finally, we compare the three algorithms based on whether they were able to recover the intended formula. In our experimental setup, we consider the intended formula to be the one that was used to generate the input sample.
Table~\ref{tab:intended-form} illustrates the percentage of runs in which the intended formula was recovered by the three algorithms on \zeroset{} and \onetwoset{} benchmark sets.
On \zeroset{} benchmark set, both \algonesat{} and \algonedt{} almost never produced the intended formula, while \algtwo{} produced the original formula in several runs. 
A possible explanation can be that \algtwo{} searches for simple completions first, finding the intended formula often. 
For Algo1, however, the intermediate samples generated will not be optimized to obtain the intended formulas. 
On \onetwoset{} benchmark set, on the other hand, all the algorithms recover the intended formulas in most runs.
This is because, the intermediate samples do not play a role in this benchmark set and thus, all the algorithms have a similar performance.

\begin{table}[h]
\begin{center}
	\renewcommand{\arraystretch}{1.2}
	
	\begin{tabular}{ | c | c | c | }
		\hline
		 & \zeroset{} & \onetwoset{} \\
		 \hline
		\algonesat{} & 0.1 & 97.4 \\  
		\algonedt{} & 0 & 97.4 \\
		\algtwo{} & 54.9 & 97.9 \\
		\hline
	\end{tabular}
\end{center}
\caption{Percentage of runs in which intended formula is recovered}
\label{tab:intended-form}
\end{table}

Overall, we conclude that Algorithm~\ref{alg:syntax-tree} performs better, in terms of runtime, size and recovery of specifications, even when Algorithm~\ref{alg:ltl-learn} uses state-of-the-art LTL learning techniques.
\vspace{-0.1cm}


\section{Conclusion and Future Work}
\label{sec:conclusion}
 
We have introduced LTL sketching, a novel way of writing formal specifications in LTL.
The key idea is that a user can write a partial specification, i.e., a sketch, which is then completed based on given examples of desired and undesired system behavior.
We have shown that the sketching problem is in \NP{} and presented two SAT-based sketching algorithms.
Our experimental evaluation has shown that our algorithms can effectively complete sketches consisting of different types of missing information.

A natural direction for future work is to lift the idea of specification sketching to other specification languages, such as Signal Temporal Logic~(STL)~\cite{DBLP:conf/formats/MalerN04}, the Property Specification Language~(PSL)~\cite{DBLP:series/icas/EisnerF06}, or Computation Tree Logic~(CTL)~\cite{ClarkeE81}.
We also plan to investigate how specification sketching can be applied to visual specifications, including UML (high-level) message sequence charts~\cite{DBLP:books/sp/03/HarelT03}.
Moreover, we intend to extend the notion of sketching beyond the use of examples to fill out placeholders (e.g., by allowing the engineer to constraint placeholders using simple logical formulas or regular expressions).

	\bibliographystyle{splncs04}
	\bibliography{bib}

\begin{thebibliography}{10}
\providecommand{\url}[1]{\texttt{#1}}
\providecommand{\urlprefix}{URL }
\providecommand{\doi}[1]{https://doi.org/#1}

\bibitem{ArifLERCT20}
Arif, M.F., Larraz, D., Echeverria, M., Reynolds, A., Chowdhury, O., Tinelli,
  C.: {SYSLITE:} syntax-guided synthesis of {PLTL} formulas from finite traces.
  In: {FMCAD}. pp. 93--103. {IEEE} (2020)

\bibitem{BacheriniFTZ06}
Bacherini, S., Fantechi, A., Tempestini, M., Zingoni, N.: A story about formal
  methods adoption by a railway signaling manufacturer. In: {FM}. Lecture Notes
  in Computer Science, vol.~4085, pp. 179--189. Springer (2006)

\bibitem{BadeauA05}
Badeau, F., Amelot, A.: Using {B} as a high level programming language in an
  industrial project: Roissy {VAL}. In: {ZB}. Lecture Notes in Computer
  Science, vol.~3455, pp. 334--354. Springer (2005)

\bibitem{BalyoHJ17}
Balyo, T., Heule, M.J.H., J{\"{a}}rvisalo, M.: {SAT} competition 2016: Recent
  developments. In: 31st {AAAI} Conference on Artificial Intelligence, {AAAI}
  '17. pp. 5061--5063. {AAAI} Press (2017)

\bibitem{Bowen20}
Bowen, J.P.: Gerard o'regan: Concise guide to formal methods: Theory,
  fundamentals and industry applications. Formal Aspects Comput.
  \textbf{32}(1),  147--148 (2020)

\bibitem{DBLP:conf/ijcai/CamachoIKVM19}
Camacho, A., Icarte, R.T., Klassen, T.Q., Valenzano, R.A., McIlraith, S.A.:
  {LTL} and beyond: Formal languages for reward function specification in
  reinforcement learning. In: Kraus, S. (ed.) Proceedings of the Twenty-Eighth
  International Joint Conference on Artificial Intelligence, {IJCAI} 2019,
  Macao, China, August 10-16, 2019. pp. 6065--6073. ijcai.org (2019).
  \doi{10.24963/ijcai.2019/840}, \url{https://doi.org/10.24963/ijcai.2019/840}

\bibitem{CamachoM19}
Camacho, A., McIlraith, S.A.: Learning interpretable models expressed in linear
  temporal logic. In: {ICAPS}. pp. 621--630. {AAAI} Press (2019)

\bibitem{ClarkeE81}
Clarke, E.M., Emerson, E.A.: Design and synthesis of synchronization skeletons
  using branching-time temporal logic. In: Logics of Programs, Workshop,
  Yorktown Heights, New York, USA, May 1981. Lecture Notes in Computer Science,
  vol.~131, pp. 52--71. Springer (1981). \doi{10.1007/BFb0025774}

\bibitem{CoferM14}
Cofer, D.D., Miller, S.P.: {DO-333} certification case studies. In: {NASA}
  Formal Methods. Lecture Notes in Computer Science, vol.~8430, pp. 1--15.
  Springer (2014)

\bibitem{Courtois}
Courtois, P.J., Seidel, F., Gallardo, F., Bowell, M.: Licensing of safety
  critical software for nuclear reactors. common position of international
  nuclear regulators and authorised technical support organisations.  (12
  2015). \doi{10.13140/RG.2.1.2789.8968}

\bibitem{DwyerAC98}
Dwyer, M.B., Avrunin, G.S., Corbett, J.C.: Property specification patterns for
  finite-state verification. In: {FMSP}. pp. 7--15. {ACM} (1998)

\bibitem{DBLP:series/icas/EisnerF06}
Eisner, C., Fisman, D.: A Practical Introduction to {PSL}. Series on Integrated
  Circuits and Systems, Springer (2006). \doi{10.1007/978-0-387-36123-9}

\bibitem{FeckoUASDMM00}
Fecko, M.A., Uyar, M.{\"{U}}., Amer, P.D., Sethi, A.S., Dzik, T., Menell, R.,
  McMahon, M.: A success story of formal description techniques: Estelle
  specification and test generation for {MIL-STD} 188-220. Comput. Commun.
  \textbf{23}(12),  1196--1213 (2000)

\bibitem{abs-2102-00876}
Fijalkow, N., Lagarde, G.: The complexity of learning linear temporal formulas
  from examples. CoRR  \textbf{abs/2102.00876} (2021)

\bibitem{Fix08}
Fix, L.: Fifteen years of formal property verification in intel. In: 25 Years
  of Model Checking. Lecture Notes in Computer Science, vol.~5000, pp.
  139--144. Springer (2008)

\bibitem{GarioCMTR16}
Gario, M., Cimatti, A., Mattarei, C., Tonetta, S., Rozier, K.Y.: Model checking
  at scale: Automated air traffic control design space exploration. In: {CAV}
  {(2)}. Lecture Notes in Computer Science, vol.~9780, pp. 3--22. Springer
  (2016)

\bibitem{DBLP:books/sp/03/HarelT03}
Harel, D., Thiagarajan, P.S.: Message sequence charts. In: {UML} for Real -
  Design of Embedded Real-Time Systems, pp. 77--105. Kluwer (2003).
  \doi{10.1007/0-306-48738-1\_4}

\bibitem{Holzmann97}
Holzmann, G.J.: The model checker {SPIN}. {IEEE} Trans. Software Eng.
  \textbf{23}(5),  279--295 (1997)

\bibitem{Holzmann02}
Holzmann, G.J.: The logic of bugs. In: {SIGSOFT} {FSE}. pp. 81--87. {ACM}
  (2002)

\bibitem{KimMSAS19}
Kim, J., Muise, C., Shah, A., Agarwal, S., Shah, J.: Bayesian inference of
  linear temporal logic specifications for contrastive explanations. In:
  {IJCAI}. pp. 5591--5598. ijcai.org (2019)

\bibitem{KleinAEHCDEEKNSTW10}
Klein, G., Andronick, J., Elphinstone, K., Heiser, G., Cock, D., Derrin, P.,
  Elkaduwe, D., Engelhardt, K., Kolanski, R., Norrish, M., Sewell, T., Tuch,
  H., Winwood, S.: sel4: formal verification of an operating-system kernel.
  Commun. {ACM}  \textbf{53}(6),  107--115 (2010)

\bibitem{LemieuxB15}
Lemieux, C., Beschastnikh, I.: Investigating program behavior using the texada
  {LTL} specifications miner. In: {ASE}. pp. 870--875. {IEEE} Computer Society
  (2015)

\bibitem{LemieuxPB15}
Lemieux, C., Park, D., Beschastnikh, I.: General {LTL} specification mining
  {(T)}. In: {ASE}. pp. 81--92. {IEEE} Computer Society (2015)

\bibitem{LiDS11}
Li, W., Dworkin, L., Seshia, S.A.: Mining assumptions for synthesis. In:
  {MEMOCODE}. pp. 43--50. {IEEE} (2011)

\bibitem{Lowe96}
Lowe, G.: Breaking and fixing the needham-schroeder public-key protocol using
  {FDR}. Softw. Concepts Tools  \textbf{17}(3),  93--102 (1996)

\bibitem{DBLP:conf/formats/MalerN04}
Maler, O., Nickovic, D.: Monitoring temporal properties of continuous signals.
  In: Formal Techniques, Modelling and Analysis of Timed and Fault-Tolerant
  Systems, Joint International Conferences on Formal Modelling and Analysis of
  Timed Systems, {FORMATS} 2004 and Formal Techniques in Real-Time and
  Fault-Tolerant Systems, {FTRTFT} 2004, Grenoble, France, September 22-24,
  2004, Proceedings. Lecture Notes in Computer Science, vol.~3253, pp.
  152--166. Springer (2004). \doi{10.1007/978-3-540-30206-3\_12}

\bibitem{NeiderG18}
Neider, D., Gavran, I.: Learning linear temporal properties. In: {FMCAD}. pp.
  1--10. {IEEE} (2018)

\bibitem{PakonenPBV16}
Pakonen, A., Pang, C., Buzhinsky, I., Vyatkin, V.: User-friendly formal
  specification languages - conclusions drawn from industrial experience on
  model checking. In: {ETFA}. pp.~1--8. {IEEE} (2016)

\bibitem{Pnueli77}
Pnueli, A.: The temporal logic of programs. In: {FOCS}. pp. 46--57. {IEEE}
  Computer Society (1977)

\bibitem{abs-2110-06726}
Raha, R., Roy, R., Fijalkow, N., Neider, D.: Scalable anytime algorithms for
  learning formulas in linear temporal logic. CoRR  \textbf{abs/2110.06726}
  (2021)

\bibitem{Riener19}
Riener, H.: Exact synthesis of {LTL} properties from traces. In: {FDL}.
  pp.~1--6. {IEEE} (2019)

\bibitem{0002FN20}
Roy, R., Fisman, D., Neider, D.: Learning interpretable models in the property
  specification language. In: {IJCAI}. pp. 2213--2219. ijcai.org (2020)

\bibitem{SchlorJW98}
Schl{\"{o}}r, R., Josko, B., Werth, D.: Using a visual formalism for design
  verification in industrial environments. In: Services and Visualization:
  Towards User-Friendly Design. Lecture Notes in Computer Science, vol.~1385,
  pp. 208--221. Springer (1998)

\bibitem{ShahKSL18}
Shah, A., Kamath, P., Shah, J.A., Li, S.: Bayesian inference of temporal task
  specifications from demonstrations. In: NeurIPS. pp. 3808--3817 (2018)

\bibitem{SolarLezama13}
Solar{-}Lezama, A.: Program sketching. Int. J. Softw. Tools Technol. Transf.
  \textbf{15}(5-6),  475--495 (2013)

\bibitem{SolarLezamaRBE05}
Solar{-}Lezama, A., Rabbah, R.M., Bod{\'{\i}}k, R., Ebcioglu, K.: Programming
  by sketching for bit-streaming programs. In: {PLDI}. pp. 281--294. {ACM}
  (2005)

\bibitem{VerhulstJ07}
Verhulst, E., de~Jong, G.G.: Opencomrtos: An ultra-small network centric
  embedded {RTOS} designed using formal modeling. In: {SDL} Forum. Lecture
  Notes in Computer Science, vol.~4745, pp. 258--271. Springer (2007)

\bibitem{WasylkowskiZ11}
Wasylkowski, A., Zeller, A.: Mining temporal specifications from object usage.
  Autom. Softw. Eng.  \textbf{18}(3-4),  263--292 (2011)

\end{thebibliography}

\end{document}